\documentclass[lettersize,journal]{IEEEtran}
\usepackage{amsmath,amsfonts}
\usepackage{array}
\usepackage[caption=false,font=scriptsize,labelfont=sf,textfont=sf]{subfig}
\usepackage{textcomp}
\usepackage{stfloats}
\usepackage{url}
\usepackage{verbatim}
\usepackage{graphicx}
\usepackage{cite}
\usepackage{xcolor}
\usepackage{amsthm}
\usepackage{amssymb}
\theoremstyle{italic}
\usepackage{amsmath}
\usepackage[table,xcdraw]{xcolor} 
\usepackage{multirow} 
\usepackage{booktabs} 
\usepackage{footnote}
\usepackage{makecell}
\usepackage{tabularx}
\usepackage[ruled,vlined,linesnumbered]{algorithm2e}
\DeclareMathOperator*{\argmax}{arg\,max}
\allowdisplaybreaks[4]
\begin{document}

\title{Joint Optimization for Federated Learning and Transmission over Unreliable Wireless Networks with Heterogeneous Data}

\author{Changheng~Wang,~\IEEEmembership{Student Member,~IEEE,} Xianchao~Zhang,~\IEEEmembership{Member,~IEEE,} Zhiqing~Wei,~\IEEEmembership{Member,~IEEE,} Lingzhu~Zhao, Zhongming~Yang, and Zhiyong~Feng,~\IEEEmembership{Senior Member,~IEEE}
	
\thanks{
	Changheng Wang, Zhiqing Wei, and Zhiyong Feng are with the Key Laboratory of Universal Wireless Communications, Ministry of Education, School of Information and Communication Engineering, Beijing University of Posts and Telecommunications, Beijing 100876, China (e-mail: ch\_wang@bupt.edu.cn; weizhiqing@bupt.edu.cn; fengzy@bupt.edu.cn).
	
	Xianchao Zhang is with the Provincial Key Laboratory of Multimodal
	Perceiving and Intelligent Systems, Jiaxing University, Jiaxing 314001, China (e-mail: zhangxianchao@zjxu.edu.cn).
	
	Lingzhu Zhao is with the School of Electronics and Information, Northwestern Polytechnical University, Xi’an 710000, China (email: zlzzhao@mail.nwpu.edu.cn).
	
	Zhongming Yang is with the School of Information Science and Engineering, Southeast University, Nanjing 210096, China (email: 23023-\linebreak9440@seu.edu.cn).
}}

\markboth{}%
{Shell \MakeLowercase{\textit{et al.}}: A Sample Article Using IEEEtran.cls for IEEE Journals}

\maketitle

\begin{abstract}
In wireless federated learning (FL), data heterogeneity and multiple local updates induce client drift, degrading model convergence. It is further affected by unreliable wireless links, as transmission errors may invalidate model updates. To address these challenges, we propose a federated random walk averaging (FedRW) framework, which is a variant of federated averaging (FedAvg) that mitigates data heterogeneity by updating models along random walk (RW) paths and aggregating them at the server. Model parameters are transmitted in packets with retransmission support to improve training quality by mitigating wireless errors along RW paths. Meanwhile, wireless transmission delays hinder the exploration of FedRW. To this end, we formulate a joint optimization problem that integrates learning, RW path selection, and transmission parameter tuning, aiming to minimize the training loss under delay constraints. By deriving an upper bound on the expected convergence of FedRW over unreliable wireless networks, we reduce the problem to a general form agnostic to task type and model architecture. A distributed solution is then proposed, in which the server or clients optimize packet size and maximum number of retransmissions locally, and efficiently select reliable and expandable next-hop nodes via a resilience-aware beam search with dynamic pruning. Simulation results show that FedRW achieves 2.26\%–9\% higher accuracy than state-of-the-art baselines under high data heterogeneity. Furthermore, the jointly optimized FedRW yields at least 2.78\% higher accuracy and faster convergence compared to baselines.
\end{abstract}

\begin{IEEEkeywords}
Federated learning, random walk, wireless networks, transmission optimization, convergence analysis.
\end{IEEEkeywords}

\section{Introduction}
\label{Section1}
\subsection{Motivation}
\IEEEPARstart{F}{ederated} learning (FL) is a distributed training framework that balances computational and communication overhead by performing multiple local training and exchanging model parameters instead of raw data. It supports applications in resource-constrained environments such as unmanned aerial vehicle (UAV) networks, industrial Internet of Things (IIoT), and emergency response networks, encompassing key tasks such as object detection and recognition, equipment monitoring, task allocation and path planning \cite{1Nguyen}.

A key challenge in FL is data heterogeneity, as client data are typically non-independent and identically distributed (Non-IID) due to differences in location, usage, and device configurations, leading to client drift \cite{2Gao}. While multiple local updates reduce communication, they can amplify divergence \cite{3Karimireddy}. In wireless FL, this problem becomes more severe. Since model parameters are transmitted over wireless channels, the instability of wireless links (due to fading, interference, and noise) can cause bit errors, rendering some of model unusable. This weakens the quality of local updates and global aggregation. As a result, data heterogeneity and the unreliability of wireless transmission constitute the two core challenges that significantly impact FL convergence.

In response to Non-IID data distributions, random walk (RW) learning gradually traverses clients via communication links, reducing client drift through distributed updates \cite{4Ayache}. However, its sequential nature limits efficiency. Inspired by \cite{5Wang}, we propose a parallel federated random walk averaging (FedRW) framework. Multiple chains run simultaneously, each traversing clients independently. Upon completing local RW updatas, chains send updated models to the server for global aggregation, without relying on complex aggregation mechanisms. Although this approach introduces additional communication overhead compared to traditional FL, the cost is justified by access to richer and potentially more representative data samples. In essence, it extends the core idea of FL by trading increased computation for reduced communication complexity.

In unreliable wireless networks, parallel RW methods face greater challenges than traditional FL due to their reliance on hop-by-hop model transmission. A single transmission error may cause the entire chain's update to be discarded, leading to wasted resources and slower convergence. Existing mainstream approaches such as \cite{6Chen, 7Zheng} often transmit the entire model as a single packet without supporting retransmissions, which can limit training efficiency in unreliable transmission. To address this, we divide model parameters into multiple packets and introduce a retransmission mechanism. If a packet is in error, only that packet is retransmitted, reducing the risk of updated model loss and lowering communication overhead. To accommodate varying wireless conditions, system error tolerance, and model sizes, transmission parameters such as packet size and maximum retransmissions should be dynamically adjusted. Moreover, since RW path delay and reliability directly affect training efficiency, jointly optimizing path selection and transmission configuration is essential for improving performance in wireless FL settings.

The optimization problem is an integer nonlinear program with high complexity. A centralized solution, where the server handles scheduling, becomes inefficient and impractical for large-scale systems. Therefore, we decompose the problem for local resolution. Each client makes autonomous decisions based on its own state and local link information, enabling a scalable and efficient distributed optimization approach.
\vspace{-1em}
\subsection{Related Work}
We begin by reviewing two key challenges that affect FL convergence: data heterogeneity and deployment over wireless networks, both central to this work. A systematic summary and comparison of the related work is provided in Table \ref{table_1}.

To address data heterogeneity, prior studies have proposed various excellent strategies. Regularization and constraint-based methods \cite{3Karimireddy}, \cite{7Zheng}, \cite{11Li}, \cite{12Sun} aim to reduce client drift by modifying the local objective, supported by strong theoretical convergence guarantees. However, improper tuning may lead to overfitting or underfitting, and constraint-based variants can incur extra computation. Several studies \cite{14Sattler}, \cite{15Ghosh}, \cite{16Ma} propose clustering clients based on local features or model update differences, training specialized models per cluster. Clustered FL demonstrates potential in multi-modal tasks by achieving smaller theoretical generalization error. However, they rely on server-side clustering using limited information, which may yield unstable similarity measures, and suffer from cold-start issues for new clients \cite{17Duan}. Meta-learning methods \cite{18Li}, \cite{19Liu}, \cite{20Chen} aim to build adaptable global models using few-shot local updates, mitigating cold-start and heterogeneity. However, their reliance on Hessian computations and nested structures increases complexity. Semi-federated learning (SFL) \cite{20Ni} adjusts the learning structure to better handle non-IID data, but requires more fine-tuning to effectively balance client-specific updates and global model adaptation. Data-centric approaches \cite{21Yoon}, \cite{22Zhang} attempt to align distributions by sharing or synthesizing auxiliary data. While effective, they compromise privacy and introduce additional bias and overhead \cite{23Li}. RW learning offers an alternative by structuring cross-client update paths, exposing the training process to diverse data distributions without sharing raw data \cite{4Ayache}, \cite{24Triastcyn}, \cite{25Pan}, \cite{27Sun}. This helps reduce convergence bias with moderate communication cost. To improve training efficiency and fault tolerance, recent work explores parallel RW \cite{5Wang}, \cite{28Ye}, where multiple chains aggregate updates periodically or at intersection points to accelerate convergence. However, such decentralized aggregation is inherently complex and lacks global coordination, leading to redundant updates and inter-chain interference that compromise convergence stability.

\begin{table}[!t]
	\centering
	\caption{Technical Comparison of Related Methodologies.}
	\label{table_1}
	\scriptsize
	\renewcommand{\arraystretch}{1.3}
	\setlength{\tabcolsep}{4pt}
	\begin{tabularx}{\columnwidth}{@{} >{\raggedright\arraybackslash}p{2.7cm} >{\raggedright\arraybackslash}p{2.9cm} X @{}}
		\toprule
		\textbf{Methodology} & \textbf{Core Mechanism} & \textbf{Limitation / Gap} \\ 
		\midrule
		\multicolumn{3}{@{} l}{\textit{\textbf{Data Heterogeneity Solutions}}} \\ \addlinespace[1pt]
		Regularization \cite{3Karimireddy,7Zheng,11Li,12Sun} & Local objective constraint & Hyperparameter sensitivity \\ 
		Clustered FL \cite{14Sattler,15Ghosh,16Ma,17Duan} & Similarity-based grouping & Cold-start \& high overhead \\
		Meta-learning \cite{18Li,19Liu,20Chen} & Few-shot adaptability & Hessian-based complexity \\
		SFL \cite{20Ni} & Centralized-Fed hybrid & Paradigm-specific tuning \\
		Data-centric \cite{21Yoon,22Zhang,23Li} & Distribution alignment & Privacy \& bias concerns \\ 
		RW Learning \cite{4Ayache,5Wang,24Triastcyn,25Pan,27Sun,28Ye} & Sequential distribution traversal & Ideal link assumption \\ 
		\midrule
		\multicolumn{3}{@{} l}{\textit{\textbf{Wireless Deployment Optimization}}} \\ \addlinespace[1pt]
		Physical Layer \cite{34Ni,35Wanli} & Signal quality enhancement & Hardware reliance \\
		Resource optimization \cite{6Chen,35Chen,7Zheng,36Liu,37Salari,38Khan,39Ren} & Communication and computing resource & Coarse-grained update \\
		Reliability Control \cite{40Song,41Motamedi,42Razi,43Huang,44Zhao} & Retransmission \& packet sizing & Decoupled from FL theory \\ 
		\midrule
		\textbf{Proposed} \textbf{FedRW} & Joint FL and transmission optimization & Bridging RW topology and wireless reliability \\ 
		\bottomrule
	\end{tabularx}
	\vspace{-1em}
\end{table}

In wireless FL, some excellent studies \cite{30Samarakoon}, \cite{31Alishahi}, \cite{32Elgabli}, \cite{33Hu} have addressed key challenges such as client scheduling and efficient resource allocation, including bandwidth and energy optimization, over reliable wireless networks. However, practical deployments often suffer from unreliable wireless links, leading to model transmission errors \cite{34Ye}, which can disrupt the training process. Given limited wireless resources and client capabilities, such errors are frequently unavoidable. To address this, Ni \textit{et al.} \cite{34Ni} exploited over-the-air computation (AirComp) and reconfigurable intelligent surface (RIS) technologies to enhance the signal transmission quality at the physical layer, effectively minimizing errors in model aggregation. They further extended this approach to a multi-RIS-assisted FL system by jointly optimizing the grouping, phase shift configurations, and transmission power to collaboratively minimize the training loss \cite{35Wanli}. In contrast, another line of research focuses on designing robust resource allocation and intelligent retransmission mechanisms at the algorithmic and network protocol levels, to actively tolerate and address transmission errors. Chen \textit{et al.} \cite{6Chen}, \cite{35Chen} jointly optimized learning, resource allocation, and user selection under packet errors to reduce convergence time and training loss. Zheng \textit{et al.} \cite{7Zheng} enhanced convergence by optimizing power control and client selection. Liu \textit{et al.} \cite{36Liu} proposed a joint communication, sensing, and computing framework to improve data processing in UAV-based FL. Salari \textit{et al.} \cite{37Salari} examined the trade-off between coding rate, convergence time, and accuracy, while Khan \textit{et al.} \cite{38Khan} aimed to mitigate model degradation by reducing packet errors and delays. Ren \textit{et al.} \cite{39Ren} analyzed the joint impact of pruning and transmission errors, providing a closed form solution for optimizing pruning and bandwidth allocation. However, these methods treat each model update as a single packet, which is discarded entirely if errors occur. This severely degrades convergence performance, particularly in FedRW. Introducing fine grained packet segmentation and retransmission mechanisms can mitigate this issue. For example, Song \textit{et al.} \cite{40Song} allowed retransmissions until successful decoding and optimized client participation to minimize average convergence time. Motamedi \textit{et al.} \cite{41Motamedi} designed a two-phase scheme, where retransmissions are triggered based on the correlation between local and global gradients. In addition, several studies have explored adaptive packet sizing strategies. Razi \textit{et al.} \cite{42Razi} adjusted packet size based on channel error rates and communication load  to minimize delay and improve energy efficiency. Huang \textit{et al.} \cite{43Huang} optimized packet sizes according to system state to minimize long-term communication and control cost. Zhao \textit{et al.} \cite{44Zhao} jointly optimized packet size and transmission scheduling to reduce collision probability and delay. Despite these advances, none of these works establish an explicit theoretical link between transmission parameters (e.g., packet size and maximum number of retransmissions) and FL convergence performance, especially in multi-hop structures with spatially correlated wireless links.

\subsection{Contribution and Organization}
This paper proposes a new FL framework to resist data heterogeneity, using RW updates instead of local iterative updates. And jointly consider learning performance, RW path and wireless transmission parameters, enabling the proposed method to operate effectively over unreliable wireless networks. The main contributions are summarized as follows.
\begin{itemize}
	\item \textbf{Wireless FedRW framework for data heterogeneity:} We propose a federated averaging scheme based on parallel RWs, where local models are forwarded along RW chains and updated at each visited client. The final model from each chain is aggregated at the server. Each model update is segmented into multiple packets with retransmission support. Since convergence depends on the quality of RW paths, effective path exploration is crucial.
	\item \textbf{Joint learning and transmission optimization:} Over unreliable wireless networks, transmission parameters affect exploration efficiency and transmission reliability. We formulate a joint optimization problem of RW path selection, packet size, and maximum number of retransmissions to minimize training loss under delay constraints.
	\item \textbf{Convergence guarantee and general loss optimization:} We derive the expected convergence bound of FedRW under packet errors using the Polyak-{\L}ojasiewicz (P{\L}) condition, a widely studied assumption in non-convex optimization. Our analysis establishes an explicit relationship among RW paths, transmission parameters, and learning performance. Based on this, we reformulate the loss minimization into a general optimization problem independent of task type or model structure.
	\item \textbf{Distributed path and transmission coordination:} We develop a distributed solution where the server and each client independently optimize local transmission parameters, and propose a dynamic pruning beam search strategy to efficiently select reliable and expandable next-hop.
\end{itemize}

We evaluate FedRW on MNIST, Fashion-MNIST, CIFAR-10, and CIFAR-100. Under high data heterogeneity, FedRW improves accuracy over FedAvg, Fedprox and FedNova by 2.26\%–9\%. Additional experiments over unreliable wireless networks demonstrate that jointly optimizing RW paths and transmission parameters yields at least 2.78\% accuracy gain and accelerates convergence. We further verified that optimized FedRW is always optimal across diverse system and transmission settings.

The rest of this paper is organized as follows. Section \ref{Section2} introduces the system model. Section \ref{Section3} presents FedRW and the joint optimization problem over unreliable wireless networks. Section \ref{Section4} analyzes convergence and reformulates the problem. Section \ref{Section5} details the distributed solution. Section \ref{Section6} reports experimental results, and Section \ref{Section7} concludes.

\section{System Model}
\label{Section2}
\subsection{Learning Objective}
Consider a wireless network consisting of a set of clients $\mathcal{U} = \{1, 2, \ldots, U\}$ that participate in a distributed learning task, along with a central server. Clients may share their local models with each other or with the server. The server is responsible for collecting local models and coordinating the aggregation of a consensus global model. Suppose that each client $i \in \mathcal{U}$ holds a local dataset of size $b$, denoted by $\mathcal{D}_i=\left\{\xi_{i, \theta}:=(\mathbf{x}_{i, \theta}, y_{i, \theta}) \in \mathbb{R}^d \times \mathbb{R} \text { for } \theta \in[b]\right\}$. The goal of each client $i$ is to minimize the empirical risk $f_i(\mathbf{w}_i)=\frac{1}{b} \sum_{\theta=1}^{b} f(\mathbf{w}_i; \xi_{i, \theta})$ over its local dataset, where $\mathbf{w}_i$ is the local model of client $i$. The global learning objective is \cite{45McMahan}
\begin{equation}
	\label{eq1}
	\min _{\mathbf{w} \in \mathcal{W}} F(\mathbf{w}):=\frac{1}{U} \sum_{i=1}^U f_i\left(\mathbf{w}_i\right),
\end{equation}
where $\mathbf{w} \in \mathcal{W} \subseteq \mathbb{R}^d$ is the global model aggregated by the server, and $\mathcal{W}$ is a closed and bounded feasible set. The optimal global model is defined as $\mathbf{w}^* \in \arg \min _{\mathbf{w} \in \mathcal{W}} F(\mathbf{w})$.

\subsection{Network and Transmission Model}
\renewcommand{\thefootnote}{1}
The wireless network adopts a device-to-device (D2D) communication architecture. Although the server performs the logical role of aggregation, all nodes (including the server and clients\footnote{For clarity, we refer to both the server and clients as nodes.}) are treated equally in terms of link modeling and transmission parameter optimization. Distributed collaboration is achieved through direct inter-node communication, without reliance on centralized coordination. We model the D2D network as a directed graph $\mathcal{G} = (\mathcal{V}, \mathcal{E})$, where $\mathcal{V} = \{1, 2, \ldots, U, U+1\}$ represents the set of $U$ clients and one server. The edge set $\mathcal{E}$ contains directed links, where a directed edge $(i, j) \in \mathcal{E}$ indicates that node $i$ can successfully transmit to node $j$, but not necessarily vice versa, thus capturing potential asymmetries in wireless communication. For each node $i \in \mathcal{V}$, its out-neighbor set $\mathcal{N}_i$ is defined as the set of nodes to which $i$ can directly transmit information.

Each node can independently adjust its transmission parameters, such as packet size and maximum number of retransmissions, based on local information (e.g., link quality). This autonomy enhances communication efficiency and improves overall system robustness under unreliable wireless conditions.

To mitigate mutual interference among the $M$ parallel chains, the system adopts an orthogonal frequency allocation strategy where each chain operates on a dedicated subchannel. Intra-chain collisions are avoided through the sequential nature of model updates within each chain, while inter-chain interference is eliminated through frequency domain isolation. The rate of node $i$ transmitting its FL model parameters to node $j$ is given by \cite{40Song}
\begin{equation}
	\label{eq2}		
		r_{i,j} = 
		B \cdot \mathbb{E}_{h_{i,j}}\left[\log _2\left(1+\frac{P  |{h_{i,j}}|^2 }{B N_0}\right)\right],
\end{equation}
where $h_{i,j} = o_{i,j} d_{i,j}^{-2}$ is the channel gain, $d_{i,j}$ is the distance from node $i$ to $j$, and $o_{i,j} \sim \mathcal{CN}(0,1)$ is the Rayleigh fading parameter with $\mathbb{E}[|o_{i,j}|^2] = 1$. $P$ is the transmission power, $B$ is the bandwidth of the allocated orthogonal subchannel, and $N_0$ is the noise power spectral density.

To enhance reliable model transmission, we divide the model $\mathbf{w}_i$ (or $\mathbf{w}$), of total size $m_0$ bits, into multiple packets. Specifically, $\mathbf{w}_i$ is segmented into $\beta_{i,j} = \lceil \frac{m_0}{\zeta_{i,j}} \rceil$ packets for transmission from node $i$ to node $j$, where each packet carries a payload of $\zeta_{i,j}$ bits and incurs a fixed overhead of $e$ bits. The resulting packet size is $n_{i,j} = \zeta_{i,j} + e$, and the transmission delay per packet is
\begin{equation}
	\label{eq3}
    \delta_{i,j}=\frac{n_{i,j}}{r_{i,j}}.
\end{equation}

Given the unreliability of wireless links, transmitted packets may suffer from bit errors. Assuming a bit error rate (BER) of $\epsilon$, the corresponding packet error rate (PER) can be expressed as \cite{42Razi}
\begin{equation}
	\label{eq4}
	\varrho_{i,j}=1-(1-\epsilon)^{n_{i,j}}.
\end{equation}

To improve transmission reliability, we adopt a packet retransmission scheme. When a packet fails, only that packet is retransmitted, up to a maximum of $R_{i,j}$ attempts. The expected number of transmissions per packet is \cite{42Razi}
\begin{equation}
	\label{eq5}
	 \mathbb{E}\left[R_{i,j}\right] = \sum_{r=1}^{R_{i,j}+1} r \cdot \varrho_{i,j}^{r-1}\left(1-\varrho_{i,j}\right)+\left(R_{i,j}+1\right) \cdot \varrho_{i,j}^{R_{i,j}+1}.	
\end{equation}
Accordingly, the expected delay to transmit the model $\mathbf{w}_i$ over the wireless link from node $i$ to $j$ is given by
\begin{equation}
	\label{eq6}
	\tau_{i,j}=\beta_{i,j} \delta_{i,j} \cdot \mathbb{E}\left[R_{i,j}\right] .	
\end{equation}

This highlights the trade-off between packet size and maximum number of retransmissions. Increasing $n_{i,j}$ reduces the number of packets $\beta_{i,j}$, which helps decrease $\tau_{i,j}$, but also raises $\varrho_{i,j}$. To compensate for the higher PER, a larger $R_{i,j}$ is required, which introduces additional retransmission delays. Therefore, selecting appropriate values for $n_{i,j}$ and $R_{i,j}$ is critical to balancing transmission delay and reliability.

\subsection{Data Heterogeneity}
Client data in FL is often statistically heterogeneous, leading to inconsistencies between local and global objectives. These discrepancies can negatively affect optimization, causing the global model to drift and reducing its generalization performance. To quantify this bias, we adopt the dissimilarity metric proposed in \cite{4Ayache, 11Li}, which measures the gap between local and global objectives.

\newtheorem{definition}{Definition}
\begin{definition}
	\label{definition1}
	A local loss function $f_i$ is said to be $(\alpha, \sigma)$-locally dissimilar at $\mathbf{w}_i$ if
	\begin{equation}
		\left\| \nabla f_i\left(\mathbf{w}_i\right)\right\|^2 \leq \alpha^2+\sigma^2\|\nabla F(\mathbf{w})\|^2,
	\end{equation}
	for $\alpha \geq 0$ and $\sigma \geq 1$. The case $\alpha = 0$, $\sigma = 1$ corresponds to the IID setting. The norm $\|\cdot\|$ denotes the Euclidean norm, and $\nabla f_i(\mathbf{w}_i)$ is the gradient at client $i$.
\end{definition}

\section{Proposed FedRW and Problem Formulation}
\label{Section3}
\subsection{Federated Random Walk Averaging}
\begin{figure}[!t]
	\centering
	\includegraphics[width=3.3in]{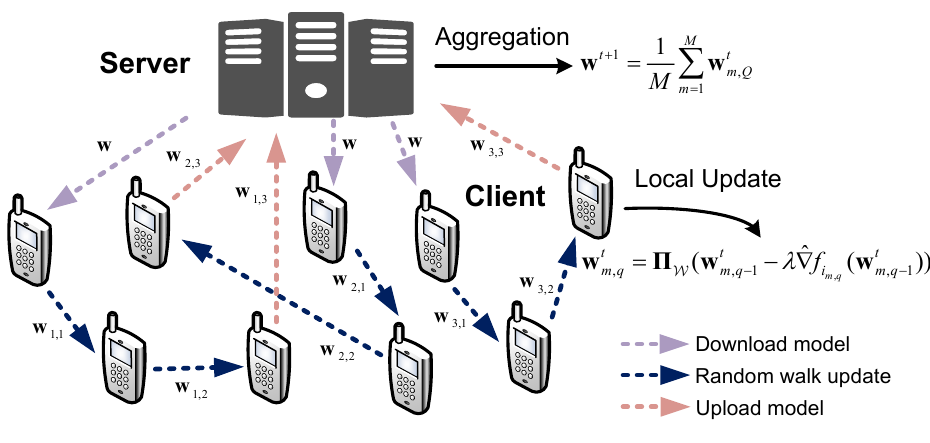}
	\caption{Illustration of the FedRW structure over a reliable wireless network with $M=3$ RW chains and $Q=3$ clients per chain.}
	\label{fig1}
\end{figure}
To attain favorable training outcomes in Non-IID settings, we propose FedRW, a distributed learning algorithm designed to mitigate the impact of data heterogeneity. Unlike FedAvg, which suffers from a communication bottleneck at the server \cite{34Ye}, FedRW replaces local updates with a random walk update scheme. This approach shifts a portion of the communication burden from server-client links to the D2D layer. Although the number of transmission hops increases, FedRW accelerates convergence by distilling and aggregating heterogeneous data features along the RW chains. Consequently, FedRW achieves higher accuracy and lower loss than FedAvg under the same total system communication overhead, as empirically validated in Fig. \ref{fig2}. The training process of FedRW can be formulated as the following optimization problem.
\begin{align}
	\underset{\mathbf{w}_m, m \in \mathcal{M}}{\min}&\frac{1}{M} \sum_{m=1}^M f_m\left(\mathbf{w}_m; \mathcal{D}_m\right) \label{eq8}\\
	\text {s.t.}~~~&~\mathbf{w}_i=\mathbf{w}, ~ i \in \mathcal{U}, \tag{\ref{eq8}{a}} \label{eq8a}
\end{align}
where $\mathcal{M} = \{1, 2, \ldots, M\}$ is the set of RW chains, $\mathcal{D}_m$ is the union of local datasets along the $m$-th chain, and $f_m(\mathbf{w}_m; \mathcal{D}_m)$ denotes the loss of $m$-th RW chain. Constraint (\ref{eq8a}) ensures that all clients share a consistent model upon convergence. To solve (\ref{eq8}), the server sends the global model $\mathbf{w}$ at the beginning of each round to the initial client of each selected RW chain. After each chain completes $Q$ steps, the server collects the updated local models from the final clients and aggregates them to update the global model. The structure of FedRW is illustrated in Fig. \ref{fig1}.

We use stochastic gradient descent (SGD) for RW model updates. In round $t$, the local update of the $q$-th client on the $m$-th chain $i_{m,q}$ is given by
\begin{equation}
	\label{eq9}
	\mathbf{w}_{m,q}^t=\mathbf{\Pi}_\mathcal{W}\left(\mathbf{w}_{m,q-1}^t-\lambda \hat{\nabla} f_{i_{m,q}}\left(\mathbf{w}_{m,q-1}^t\right)\right),
\end{equation}
where $\mathbf{w}_{m,q-1}^t$ is the model parameters received by client $i_{m,q}$ from the previous node in round $t$. $\hat{\nabla} f_{i_{m,q}}$ denotes unbiased gradient estimation of $f_{i_{m,q}}$, and $\lambda$ is the learning rate. $\mathbf{\Pi}_\mathcal{W}$ projects the updated model onto the feasible set $\mathcal{W}$. The client $i_{m,q}$ then transmits the updated model $\mathbf{w}_{m,q}^t$ to its neighbor $i_{m,q+1} \in \mathcal{N}_{m,q}$ for the next local update.

When the server receives updated models from $M$ chains, it performs centralized aggregation using the following scheme to obtain the global model $\mathbf{w}^{t+1}$.
\begin{equation}
	\label{eq10}
	\mathbf{w}^{t+1}=\frac{1}{M} \sum_{m=1}^M \mathbf{w}_{m,Q}^t,
\end{equation}
where $Q$ is the length of the RW chain, and $\mathbf{w}_{m,Q}^t$ represents the local model update from the final client of the $m$-th chain in round $t$. During training, the server and clients collaboratively optimize their models to minimize the loss function in (\ref{eq8}). The complete FedRW training procedure is summarized in Algorithm \ref{algorithm1}.

\begin{algorithm}[t]
	\caption{Federated Random Walk Averaging}
	\label{algorithm1}
	\KwIn{$\lambda$, $M$, $Q$}
	\KwOut{$\mathbf{w}^T$}
	\SetKwInOut{KwIn}{Server executes}
	\KwIn{}
	Initialize global model $\mathbf{w}^0$\;
	\For{$t = 0, 1, 2, \ldots$}{
		Randomly select initial clients for $M$ RW chains and broadcast $\mathbf{w}^t$ to them\;
		\For{$m = 1, 2, \ldots, M$ \textbf{in parallel}}{
			$\mathbf{w}_{m,Q}^t\leftarrow \texttt{RWUpdate} (m, \mathbf{w}^t)$\;
			Aggregate $\mathbf{w}^{t+1}$ via (\ref{eq10}).
		}
	}
	\SetKwInOut{KwIn}{$\texttt{RWUpdate}(m, \mathbf{w}^t)$}
	\KwIn{}
	\For{$q = 1, 2, \ldots, Q$}{
		Client $i_{m,q}$ updates $\mathbf{w}_{m,q}^t$ via (\ref{eq9})\;
		Send $\mathbf{w}_{m,q}^t$ to a random neighbor $i_{m,q+1} \in \mathcal{N}_{m,q}$;
	}
	\Return $\mathbf{w}_{m,Q}^t$ to server\;
\end{algorithm}

Moreover, many techniques developed for FedAvg such as \cite{2Gao}, \cite{3Karimireddy}, \cite{6Chen, 7Zheng, 11Li, 12Sun}, \cite{18Li}, \cite{35Chen}, \cite{49Nguyen} as well as optimization strategies from RW learning, such as client sampling \cite{4Ayache} and adaptive optimizers \cite{24Triastcyn}, \cite{27Sun}, \cite{28Ye} can be naturally extended to FedRW. This flexibility makes FedRW particularly suitable for heterogeneous federated environments. Furthermore, emerging computing paradigms such as quantum computing are expected to further enhance parallelism and potentially enable exponential speedups \cite{46Qiao}.

\subsection{Wireless FedRW Learning Procedure}
During FedRW training process, all models transmit over wireless links. These include the server sending the global model to clients, clients forwarding updates along the RW chains, and sending the final local models back to the server. Due to the unreliability of wireless channels, transmitted models may contain erroneous symbols. Unlike standard FL with local iterations, FedRW involves more frequent model exchanges due to its chain-based structure, making it more vulnerable to transmission errors. Simply discarding a corrupted model can interrupt the training chain, degrade convergence, and waste both computation and communication resources. For instance, if client $i_{m,q}$ fails to deliver its model to $i_{m,q+1}$ due to transmission errors, the chain terminates prematurely, invalidating $q$ local updates and $q+1$ transmissions. To address this, we divide each model into $\beta$ packets and enable packet retransmission, which significantly improves reliability while keeping delays within acceptable limits, thereby preserving training efficiency.

\renewcommand{\thefootnote}{2}
We use the cyclic redundancy check (CRC) to detect errors in model received over the wireless channel and apply automatic repeat request (ARQ) to retransmit any erroneous packets. In the presence of transmission errors, the actual number of successfully received chains at the server may be less than $M$. The global model aggregation (\ref{eq10}) under packet errors can be rewritten as
\begin{equation}
	\label{eq11}
	\mathbf{w}^{t+1}(\mathbf{S}, \mathbf{R}, \mathbf{n})=\frac{\sum\limits_{m=1}^M \prod\limits_{k \in \mathbf{S}_m} C_{m, k} \mathbf{w}_{m,Q}^t}{\sum\limits_{m=1}^M \prod\limits_{k \in \mathbf{S}_m} C_{m, k}},
\end{equation}
where $\mathbf{S}_m$ denotes the path selection of the $m$-th RW chain, which consists of $Q + 1$ wireless links, including the server-to-client link, client-to-client links, and the client-to-server link. The set $\mathbf{S} = [\mathbf{S}_1, \mathbf{S}_2, \ldots, \mathbf{S}_M]$ represents the paths of $M$ RW chains. $\mathbf{R}=[R_{m, k}]^{M \times|\mathbf{S}_m|}$ and $\mathbf{n}=[n_{m, k}]^{M \times|\mathbf{S}_m|}$ are the matrices that record the maximum number of retransmissions and the packet size for each hop in all RW chains, respectively. $C_{m,k} = \{0,1\}$ is an indicator variable that represents whether the model is successfully transmitted over the $k$-th wireless link of the $m$-th chain (with packet size $n_{m,k}$ and up to $R_{m,k}$ retransmissions allowed). Specifically, $C_{m,k} = 1$ indicates a successful transmission, and $C_{m,k} = 0$ indicates a failure. 

The probability of successful transmission for each packet on the $k$-th wireless link of the $m$-th RW chain is \cite{48Chen}
\begin{equation}
	\label{eq12}
	p_{m, k}=1-\left(\varrho_{m,k}\right)^{R_{m, k}+1}.
\end{equation}
Since $\beta_{m, k}$ packets are transmitted independently, $C_{m, k}$ can be defined as
\begin{equation}
	\label{eq13}
	C_{m, k} \sim \textsf{Bernoulli}\left(\left(p_{m, k}\right)^{\beta_{m, k}}\right),
\end{equation}
where the Bernoulli distribution models the probability that $C_{m,k}=1$ as $(p_{m,k})^{\beta_{m,k}}$, meaning all $\beta_{m,k}$ packets are successfully transmitted. The probability that at least one packet fails (i.e., $C_{m,k} = 0$) is $1 - (p_{m,k})^{\beta_{m,k}}$. 

In (\ref{eq11}), the transmission success rate of the entire RW chain $\mathbf{S}_m$ is determined by the product $\prod_{k \in \mathbf{S}_m} C_{m,k}$, which combines the transmission outcomes of all wireless links along the chain. If $\prod_{k \in \mathbf{S}_m} C_{m,k} = 1$, the model updated along the $m$-th RW chain is successfully delivered to the server; otherwise, a failure on any link causes the entire update to be discarded, i.e., $\prod_{k \in \mathbf{S}_m} C_{m,k} = 0$. The sum of these success indicators over all $M$ chains $\sum_{m=1}^M \prod_{k \in \mathbf{S}_m} C_{m,k}$ reflects the transmission reliability of FedRW under the current RW path selection $\mathbf{S}$, packet size $\mathbf{n}$, and maximum number of retransmissions $\mathbf{R}$. The server only aggregates the local models $\mathbf{w}_{m,Q}^t$ from successfully transmitted RW chains.

Additionally, unlike \cite{6Chen, 45McMahan}, our strategy (\ref{eq11}) adopts unweighted aggregation. In the sequential structure of FedRW, weighting a chain’s final model by its total data size, chain length, or unique client count would be inappropriate. This is because client contributions are non-linear and dependent on the position within the chain, as later updates iteratively refine previous ones. The final client of a chain effectively integrates prior updates, inherently reflecting the chain's cumulative contribution. Furthermore, by employing multiple RWs with diverse paths, FedRW naturally achieves broad client coverage and balanced information propagation. Our empirical results confirm that unweighted aggregation provides robust convergence without added complexity or potential biases. Consequently, we maintain unweighted aggregation as an efficient default, leaving the exploration of specialized aggregation strategies for future research.

\subsection{Problem Formulation}
According to (\ref{eq13}), the data packet size and maximum number of retransmissions jointly determine the model transmission success rate along the RW chains, which in turn affects the global model aggregation in (\ref{eq11}). To jointly optimize the transmission parameters and the learning performance of FedRW under wireless networks, we formulate an optimization problem that aims to minimize the training loss. This problem integrates RW path selection with the configuration of packet size and maximum number of retransmissions, achieving a trade-off between algorithm performance and communication efficiency.
\begin{align}
	\min _{\mathbf{S , R , n}} & \frac{1}{M} \sum_{m=1}^M f_m\left(\mathbf{w}\left(\mathbf{S}, \mathbf{R}, \mathbf{n}\right); \mathcal{D}_m\right) \label{eq14}\\
	\text {s.t.}~&~\left|\mathbf{S}_m\right|=Q+1, ~ \forall m \in \mathcal{M}, \tag{\ref{eq14}{a}} \label{eq14a}\\
	&~~n_{m, k} \in \mathbb{N}^{+}, ~ R_{m, k} \in \mathbb{N}, ~\forall m \in \mathcal{M}, ~\forall k \in \mathbf{S}_m, \tag{\ref{eq14}{b}} \label{eq14b}\\
	&~~ \tau_{m, k} \leq \gamma_\tau, ~\forall m \in \mathcal{M}, ~\forall k \in \mathbf{S}_m, \tag{\ref{eq14}{c}} \label{eq14c}\\
	&~~m_0 \leq \Lambda_{m, k} \leq m_0+n_{m, k}, ~\forall m \in \mathcal{M}, ~\forall k \in \mathbf{S}_m, \tag{\ref{eq14}{d}} \label{eq14d}\\
	&~~R_{m, k} \leq \gamma_R, ~ \forall m \in \mathcal{M}, ~\forall k \in \mathbf{S}_m, \tag{\ref{eq14}{e}} \label{eq14e}
\end{align}
where $f_m(\mathbf{w}(\mathbf{S}, \mathbf{R}, \mathbf{n}); \mathcal{D}_m)$ denotes the loss of $m$-th RW chain on the global model $\mathbf{w}$. (\ref{eq14a}) specifies the number of wireless links in each RW chain. (\ref{eq14b}) defines the feasible domains for the packet size and the maximum number of retransmissions. (\ref{eq14c}) imposes a delay limit $\gamma_\tau$ on model transmission over each wireless link. (\ref{eq14d}) ensures that the total number of bits transmitted $\Lambda_{m,k}=n_{m, k} \beta_{m, k}$ is sufficient to cover the entire model size $m_0$ without redundancy. (\ref{eq14e}) limits the maximum number of retransmissions of any single packet by $\gamma_R$.

\section{FedRW Convergence Analysis and Problem Simplification Over Unreliable Wireless Links}
\label{Section4}
Different task types and model architectures can lead to different forms of the loss function $f_m(\mathbf{w}(\mathbf{S}, \mathbf{R}, \mathbf{n}); \mathcal{D}_m)$ in problem (\ref{eq14}). To establish a general formulation, we first analyze the convergence upper bound of FedRW over unreliable wireless networks, in order to understand how the choice of RW paths, packet size, and maximum number of retransmissions affects the convergence performance. Based on this analysis, we then simplify the optimization problem in (\ref{eq14}) using the convergence bound as a surrogate objective. Since the global model update process is influenced by the instantaneous maximum signal-to-noise ratio (SNR), we focus on the expected convergence bound.

\subsection{Convergence Upper Bound Over Unreliable Wireless Networks}
We begin by stating the assumptions required for deriving the expected convergence bound. These assumptions are commonly used in theoretical analyses of FL \cite{12Sun}, \cite{15Ghosh}, \cite{33Hu}. Compared to typical FL analyses, our framework employs weaker and more realistic assumptions. First, it does not rely on the bounded gradient dissimilarity assumption \cite{7Zheng, 16Ma, 39Ren}, allowing for more robust convergence guarantees in highly heterogeneous networks. Second, recognizing that the loss landscape of neural networks is non-convex in general \cite{Li2018Visualizing}, \cite{Sun2020Global}, \cite{Reddi2020Adaptive}, our analysis utilizes the P{\L} condition \cite{50Karimi, 51Sun}. This property is empirically validated in neural network training and enables strong convergence guarantees in non-convex settings. Finally, our analysis dispenses with second-order differentiability or bounded Hessian assumptions \cite{6Chen}, requiring only $L$-smoothness to establish the convergence bound. These relaxed assumptions make our theoretical guarantees more applicable to practical FL involving complex neural network models.

\newtheorem{assumption}{Assumption}
\begin{assumption}
	\label{assumption1}
	The function $F(\mathbf{w})$ is differentiable and satisfies the $\mu$-P{\L} property with constant $\mu > 0$. For any $\mathbf{w} \in \mathcal{W}$, let $F\left(\mathbf{w}^*\right) = \inf_{\mathbf{w} \in \mathcal{W}} F(\mathbf{w})$, it holds that $\frac{1}{2}\left\|\nabla F(\mathbf{w})\right\|^2 \geq \mu \left(F\left(\mathbf{w}\right)-F\left(\mathbf{w}^*\right)\right)$.
\end{assumption}
\begin{assumption}
	\label{assumption2}
	The function $F(\mathbf{w})$ is $L$-smooth with Lipschitz constant $L > 0$. For any $\mathbf{w}, \mathbf{w}' \in \mathcal{W}$, it holds that $\left\|\nabla F(\mathbf{w})-\nabla F\left(\mathbf{w}^{\prime}\right)\right\| \leq L\left\|\mathbf{w}-\mathbf{w}^{\prime}\right\|$.
\end{assumption}

We define the global gradient as
\begin{equation}
	\label{eq18}
	\nabla F\left(\mathbf{w}^t\right)=\frac{1}{\psi} \sum_{m \in \Psi} \nabla f_{i_{m,Q}}\left(\mathbf{w}_{m,Q-1}^t\right),
\end{equation}
where $\Psi$ denotes the set of all $\psi$ possible RW paths, determined by the number of clients $U$ and the chain length $Q$. 

When $U$ and $Q$ are large, the size of $\Psi$ grows exponentially, making direct search for the optimal path set $\mathbf{S}$ computationally intractable. To address this challenge, we decompose the RW path selection into sequential decisions at each node along the chain. Specifically, each decision selects the next node from neighbors, thereby avoiding the infeasibility of solving a large-scale centralized problem and enabling distributed decision-making. Based on (\ref{eq11}), the global model $\mathbf{w}$ at round $t$ is aggregated as
\begin{equation}
	\label{eq19}
	\mathbf{w}^{t+1}=\mathbf{w}^{t}-\lambda\frac{\sum\limits_{m=1}^M \prod\limits_{q=0}^{Q} \sum\limits_{i \in \mathcal{N}_{m, q}} a_{m, q}^i C_{m, q}^i \nabla f_{i_{m,Q}}\left(\mathbf{w}_{m,Q-1}^t\right)}{\sum\limits_{m=1}^M \prod\limits_{q=0}^{Q} \sum\limits_{i \in \mathcal{N}_{m, q}} a_{m, q}^i C_{m, q}^i},
\end{equation}
where the node $i_{m,0}$, $\forall m \in \mathcal{M}$ is the server, and $\mathcal{N}_{m,Q}$ contains only the server. This indicates that in each chain, the model is sent from the server to the client at the first step and returned to the server at the final step. The binary variable $a_{m,q}^i = 1$ indicates that node $i \in \mathcal{N}_{m, q}$ is selected at step $q$ of chain $m$ for the next local update; otherwise, $a_{m,q}^i = 0$. 

Let $\mathbf{w}^*$ denote the optimal global model obtained under an ideal setting with no packet error, using all possible RW paths. Under Assumptions \ref{assumption1}, \ref{assumption2} and Definition \ref{definition1}, we provide the convergence upper bound of FedRW in unreliable wireless networks as stated in the following theorem.

\newtheorem{theorem}{Theorem}
\begin{theorem}
	\label{theorem1}
	Given a set of $M$ random walk paths $\mathbf{S}$, packet size $\mathbf{n}$, maximum number of retransmissions $\mathbf{R}$, and learning rate $\lambda = \frac{1}{L}$, the expected convergence rate is upper bounded by
	\begin{equation}
		\label{eq20}
		\begin{aligned}
			\mathbb{E}&\left[F\left(\mathbf{w}^{t+1}\right)-F\left(\mathbf{w}^*\right)\right] \leq \mathcal{J}^t \mathbb{E}\left[F\left(\mathbf{w}^0\right)-F\left(\mathbf{w}^*\right)\right] \\
			& +\frac{2 \alpha^2}{\psi L }\Bigg(\psi-\sum_{m=1}^M \prod\limits_{q=0}^Q \sum\limits_{i \in \mathcal{N}_{m, q}} a_{m, q}^i E_{m, q}^i\Bigg) \frac{1-\mathcal{J}^t}{1-\mathcal{J}},
		\end{aligned}
	\end{equation}
	where 
	\begin{equation}
		\mathcal{J}=1-\frac{\mu}{L}+\frac{4 \sigma^2 \mu}{\psi L}\Bigg(\psi-\sum_{m=1}^M \prod_{q=0}^Q \sum_{i \in \mathcal{N}_{m, q}} a_{m, q}^i E_{m, q}^i\Bigg),
    \end{equation}
    and $E_{m,q}^i = (p_{m, q}^i)^{\beta_{m, q}^i}$. $\mathbb{E}(\cdot)$ is the expectation over the randomness in packet error.
    
\end{theorem}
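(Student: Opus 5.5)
The plan follows the template of Chen et al.'s packet-error FL analysis, adapted to the RW aggregation (\ref{eq19}) and the P{\L} condition. Write the aggregated update as $\mathbf{w}^{t+1}=\mathbf{w}^t-\lambda\mathbf{g}^t$, where $\mathbf{g}^t$ is the ratio in (\ref{eq19}). Decompose it as $\mathbf{g}^t=\nabla F(\mathbf{w}^t)-\mathbf{o}^t$, with $\mathbf{o}^t:=\nabla F(\mathbf{w}^t)-\mathbf{g}^t$ the error from selecting only $M$ of the $\psi$ paths and from losing chains to packet errors. By Assumption \ref{assumption2} with $\lambda=1/L$,
\begin{equation*}
F(\mathbf{w}^{t+1})\le F(\mathbf{w}^t)-\tfrac{1}{L}\langle\nabla F(\mathbf{w}^t),\mathbf{g}^t\rangle+\tfrac{1}{2L}\|\mathbf{g}^t\|^2 .
\end{equation*}
Substituting the decomposition, the cross terms cancel and we get
\begin{equation*}
F(\mathbf{w}^{t+1})\le F(\mathbf{w}^t)-\tfrac{1}{2L}\|\nabla F(\mathbf{w}^t)\|^2+\tfrac{1}{2L}\|\mathbf{o}^t\|^2 .
\end{equation*}
This identity is exact and avoids any Hessian argument. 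The projection $\mathbf{\Pi}_\mathcal{W}$ is handled by nonexpansiveness, or simply ignored as in the paper's unprojected form (\ref{eq19}).

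The key step is bounding $\mathbb{E}\|\mathbf{o}^t\|^2$. Using definition (\ref{eq18}), write $\mathbf{o}^t$ as a difference of averages. Let $\mathcal{S}$ be the set of selected, successfully delivered chains, i.e. those with $\prod_q\sum_i a^i_{m,q}C^i_{m,q}=1$. Following Chen et al., I bound
\begin{equation*}
\|\mathbf{o}^t\|\le \frac{2}{\psi}\sum_{m\in\Psi\setminus\mathcal{S}}\|\nabla f_{i_{m,Q}}(\mathbf{w}^t_{m,Q-1})\|,
\end{equation*}
which holds because the normalized sum over $\mathcal{S}$ differs from the full average by terms on the complement, with a factor of $2$ from renormalization. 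Cauchy--Schwarz together with $|\Psi\setminus\mathcal{S}|\le\psi$ then gives
\begin{equation*}
\|\mathbf{o}^t\|^2\le \frac{4}{\psi}\sum_{m\in\Psi\setminus\mathcal{S}}\|\nabla f_{i_{m,Q}}\|^2 .
\end{equation*}
Definition \ref{definition1} bounds each summand by $\alpha^2+\sigma^2\|\nabla F(\mathbf{w}^t)\|^2$. Taking expectation over packet errors, the $C^i_{m,q}$ are independent Bernoulli variables by (\ref{eq13}), and only one $a^i_{m,q}$ per step is nonzero. The expected number of successful chains is therefore $\sum_m\prod_q\sum_i a^i_{m,q}E^i_{m,q}$, so
\begin{equation*}
\mathbb{E}\|\mathbf{o}^t\|^2\le\frac{4}{\psi}\Big(\psi-\sum_{m}\prod_q\sum_i a^i_{m,q}E^i_{m,q}\Big)\big(\alpha^2+\sigma^2\|\nabla F(\mathbf{w}^t)\|^2\big).
\end{equation*}

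Substituting this bound gives
\begin{equation*}
\mathbb{E}F(\mathbf{w}^{t+1})\le F(\mathbf{w}^t)-\tfrac{1}{2L}\Big(1-\tfrac{4\sigma^2}{\psi}\Xi\Big)\|\nabla F\|^2+\tfrac{2\alpha^2}{\psi L}\Xi ,
\end{equation*}
where $\Xi$ denotes the parenthesized deficit. If $1-4\sigma^2\Xi/\psi\ge 0$, Assumption \ref{assumption1} applies and replaces $\|\nabla F\|^2$ by $2\mu(F-F^*)$. Subtracting $F(\mathbf{w}^*)$ yields the one-step contraction $\mathbb{E}[F(\mathbf{w}^{t+1})-F^*]\le\mathcal{J}\,\mathbb{E}[F(\mathbf{w}^t)-F^*]+\frac{2\alpha^2}{\psi L}\Xi$ with exactly the stated $\mathcal{J}$. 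Unrolling this recursion and summing the geometric series $\sum_{j<t}\mathcal{J}^j=\frac{1-\mathcal{J}^t}{1-\mathcal{J}}$ gives (\ref{eq20}), up to the paper's indexing convention of $\mathcal{J}^t$ versus $\mathcal{J}^{t+1}$.

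I expect the main obstacle to be the bound on $\mathbb{E}\|\mathbf{o}^t\|^2$. There are two issues. First, the normalization in (\ref{eq19}) is itself random, and the event that no chain succeeds must be excluded or treated as no update. Second, the factor-$2$ bound for the ratio estimator needs a careful split: write $\mathbf{g}^t-\nabla F$ as the gap between the selected sum over $|\mathcal{S}|$ and over $\psi$ on $\mathcal{S}$, plus the missing terms, and show each piece is at most $\frac{1}{\psi}\sum_{\Psi\setminus\mathcal{S}}\|\cdot\|$. I would also need to take the expectation of the product of the selection and success indicators before applying the dissimilarity bound; this is legitimate because the gradients do not depend on the current round's transmission outcomes.
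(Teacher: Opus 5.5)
Your overall route is the paper's: the descent lemma at $\lambda=1/L$ with exact cancellation of the cross terms, a bound on the squared aggregation error via Definition~\ref{definition1} and $\mathbb{E}[C_{m,q}^i]=E_{m,q}^i$, then P{\L} and unrolling. There is one genuine gap. The intermediate inequality $\|\mathbf{o}^t\|\le \frac{2}{\psi}\sum_{m\in\Psi\setminus\mathcal{S}}\|\nabla f_{i_{m,Q}}(\mathbf{w}^t_{m,Q-1})\|$ is false. So is your plan to bound the renormalization piece by $\frac{1}{\psi}\sum_{\Psi\setminus\mathcal{S}}\|\cdot\|$.

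Write $s=|\mathcal{S}|$ and $\mathbf{u}_m=\nabla f_{i_{m,Q}}(\mathbf{w}^t_{m,Q-1})$. The renormalization piece is $\bigl(\frac{1}{s}-\frac{1}{\psi}\bigr)\sum_{m\in\mathcal{S}}\mathbf{u}_m$. Its norm can be as large as $\frac{\psi-s}{\psi}\cdot\frac{1}{s}\sum_{m\in\mathcal{S}}\|\mathbf{u}_m\|$, which is $\frac{\psi-s}{\psi}$ times the average norm over the \emph{delivered} chains. Your target is $\frac{\psi-s}{\psi}$ times the average norm over the \emph{missing} chains, and nothing orders these two averages. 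For instance, take $\mathbf{u}_m=\mathbf{0}$ for every $m\in\Psi\setminus\mathcal{S}$ and $\mathbf{u}_m=\mathbf{v}\neq\mathbf{0}$ for every $m\in\mathcal{S}$, with $0<s<\psi$. Then $\mathbf{o}^t=\frac{s}{\psi}\mathbf{v}-\mathbf{v}=-\frac{\psi-s}{\psi}\mathbf{v}\neq\mathbf{0}$, while your right-hand side vanishes. Your Cauchy--Schwarz step, which only sees the complement, therefore rests on a false premise.

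The repair is exactly what the paper does in (\ref{eq43})--(\ref{eq46}): apply the uniform dissimilarity bound to each piece \emph{before} combining them. Let $G=\sqrt{\alpha^2+\sigma^2\|\nabla F(\mathbf{w}^t)\|^2}$. The renormalization piece is at most $\frac{\psi-s}{\psi s}\cdot sG$, and the missing piece is at most $\frac{1}{\psi}(\psi-s)G$. Hence $\|\mathbf{o}^t\|^2\le\frac{4(\psi-s)^2}{\psi^2}G^2\le\frac{4(\psi-s)}{\psi}G^2$, where the last step uses $\psi-s\le\psi$ in place of your Cauchy--Schwarz. Taking expectations with $\mathbb{E}[s]=\sum_{m=1}^M\prod_{q=0}^Q\sum_{i\in\mathcal{N}_{m,q}}a_{m,q}^iE_{m,q}^i$ gives precisely the bound on $\mathbb{E}\|\mathbf{o}^t\|^2$ you wrote, and the rest of your argument goes through unchanged.

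Two of your side remarks are correct and are glossed over in the paper's proof. First, the P{\L} substitution requires $1-4\sigma^2\Xi/\psi\ge 0$ (equivalently $\mathcal{J}\le 1$), which the paper only addresses afterwards in Remark~\ref{remark1} and Proposition~\ref{proposition1}. Second, unrolling the one-step recursion actually yields $\mathcal{J}^{t+1}$ and $\frac{1-\mathcal{J}^{t+1}}{1-\mathcal{J}}$. The exponent $t$ in (\ref{eq20}) is therefore off by one, not a harmless indexing convention.
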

\begin{proof}
	Please refer to Appendix \ref{appendixA}.
\end{proof}

\newtheorem{remark}{Remark}
\begin{remark}
	\label{remark1}
	Theorem 1 shows that the expected gap between the global loss $F(\mathbf{w}^{t+1})$ at round $t+1$ and the optimal loss $F(\mathbf{w}^*)$ consists of two parts: 1) an exponentially decaying term related to the initial model error; 2) an error term caused by the unreliability of the wireless links. Since it does not converge when $\mathcal{J} \geq 1$, we focus on the case where $\mathcal{J} < 1$. When $t$ becomes large, the exponential term $\mathcal{J}^t$ tends to zero, and the convergence behavior is dominated by the error term. This term is given by $\frac{2 \alpha^2}{\psi L (1-\mathcal{J})}(\psi-\sum_{m=1}^M \prod_{q=0}^Q \sum_{i \in \mathcal{N}_{m, q}} a_{m, q}^i E_{m, q}^i)$, and depends on the next-hop selection $\mathbf{a}$ and the packet success rates $\mathbf{p}$. Based on (\ref{eq4}) and (\ref{eq12}), for a given $\epsilon$, the $\mathbf{p}$ is determined solely by the packet size $\mathbf{n}$ and the maximum number of retransmissions $\mathbf{R}$. Therefore, optimizing $\mathbf{a}$, $\mathbf{R}$, and $\mathbf{n}$ can effectively reduce the impact of the error term on the convergence upper bound, thus improving the convergence performance of FedRW.
\end{remark}

To ensure convergence in Theorem \ref{theorem1} (i.e., $\mathcal{J} < 1$), the variance $\sigma^2$ must be properly bounded, as follows.

\newtheorem{proposition}{Proposition}
\begin{proposition}
	\label{proposition1}
	Under the same assumptions and settings as Theorem \ref{theorem1}, to ensure the convergence of Theorem \ref{theorem1}, it is necessary to satisfy 
	\begin{equation}
		\label{eq22}
		\begin{aligned}
			1 \leq  \sigma^2<
			\frac{\psi}{\max\limits_{\mathbf{R}, \mathbf{n}} 4\sum\limits_{m \in \psi} \bigg(1- \prod\limits_{q=0}^Q \sum\limits_{i \in \mathcal{N}_{m, q}} a_{m, q}^i E_{m, q}^i\bigg)}.
		\end{aligned}
	\end{equation}
\end{proposition}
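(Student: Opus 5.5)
The plan is to read the condition directly off the contraction factor $\mathcal{J}$ in Theorem \ref{theorem1}. By Remark \ref{remark1}, the bound in (\ref{eq20}) converges only if $\mathcal{J}<1$, since then $\mathcal{J}^t\to 0$ and the geometric factor $\frac{1-\mathcal{J}^t}{1-\mathcal{J}}$ stays bounded. I will turn $\mathcal{J}<1$ into an inequality on $\sigma^2$. Then I will make it hold uniformly over the transmission parameters $\mathbf{R},\mathbf{n}$, which yields the worst-case denominator in (\ref{eq22}).

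Step one: write $\Delta:=\psi-\sum_{m=1}^M\prod_{q=0}^Q\sum_{i\in\mathcal{N}_{m,q}}a_{m,q}^iE_{m,q}^i$. Then $\mathcal{J}<1$ reads $-\frac{\mu}{L}+\frac{4\sigma^2\mu}{\psi L}\Delta<0$. Since $\mu,L>0$, this is equivalent to $4\sigma^2\Delta<\psi$. When $\Delta>0$ this gives $\sigma^2<\psi/(4\Delta)$.

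Step two: put $\Delta$ in the per-path form that appears in (\ref{eq22}). The idea is to view $\psi$ as a sum of $1$'s over all paths in $\Psi$. Selected paths contribute the term $1-\prod_q\sum_i a^i_{m,q}E^i_{m,q}$; unselected paths have all $a=0$, so their product vanishes and they contribute $1$. Hence $\Delta=\sum_{m\in\Psi}\big(1-\prod_{q}\sum_i a_{m,q}^iE_{m,q}^i\big)$. Each summand lies in $[0,1]$, because $E_{m,q}^i=(p_{m,q}^i)^{\beta_{m,q}^i}\in[0,1]$ and exactly one $a_{m,q}^i$ equals $1$ at each step of a selected chain. So $\Delta\ge 0$, and $\Delta=0$ only in the ideal error-free case, where $\mathcal{J}=1-\mu/L<1$ holds trivially.

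Step three: make the condition independent of the transmission configuration. I would require $4\sigma^2\Delta<\psi$ for all admissible $\mathbf{R},\mathbf{n}$, which means replacing $\Delta$ by $\max_{\mathbf{R},\mathbf{n}}\Delta$. This maximum is attained because $\mathbf{R}$ and $\mathbf{n}$ range over a finite feasible set, fixed by the constraints (\ref{eq14b})–(\ref{eq14e}). That gives the upper bound in (\ref{eq22}). The lower bound $1\le\sigma^2$ is simply carried over from Definition \ref{definition1}.

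The main obstacle is interpretive rather than technical. The word "necessary" has to be understood as necessary for the bound of Theorem \ref{theorem1} to contract for every parameter choice; for a single fixed $(\mathbf{R},\mathbf{n})$, the maximum yields a sufficient condition instead. I would state this uniformity explicitly. I would also reconcile the indexing: the sum written over "$m\in\psi$" in (\ref{eq22}) must be read as a sum over the path set $\Psi$, with the decomposition from step two.
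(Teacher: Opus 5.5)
Your proof is correct, and its skeleton matches the paper's. You rewrite $\mathcal{J}<1$ as $4\sigma^2\Delta<\psi$ using $\mu,L>0$. You then replace $\Delta$ by $\max_{\mathbf{R},\mathbf{n}}\Delta$ so that contraction holds for every transmission configuration, and you take $\sigma^2\ge 1$ from Definition~\ref{definition1}.

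The genuine difference is how you reach the per-path sum in (\ref{eq22}). The paper never treats general $M$. It sets $M=\psi$, so every possible path participates and $\psi-\sum_{m=1}^{\psi}\prod_q\sum_i a_{m,q}^iE_{m,q}^i$ splits term by term. You instead keep the $M$ chains of Theorem~\ref{theorem1} and pad the $\psi-M$ unused paths with zero indicators. This makes $\Delta=\sum_{m\in\Psi}\bigl(1-\prod_q\sum_i a_{m,q}^iE_{m,q}^i\bigr)$ an exact identity. It uses, as the appendix does via $\Psi_s\subseteq\Psi$, that distinct chains are distinct paths.

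What your route buys:
\begin{itemize}
\item It gives the exact, necessary-and-sufficient condition for $\mathcal{J}<1$ uniformly over $(\mathbf{R},\mathbf{n})$, for the chains actually selected. This is the literal reading of ``same settings as Theorem~\ref{theorem1}''.
\item It handles the case $\Delta=0$ and the attainment of the maximum, both of which the paper skips.
\item It makes visible what the specialization hides. Because $\Delta\ge\psi-M$, the interval in (\ref{eq22}) is nonempty only if $M>3\psi/4$, which is far from the practical regime $M\ll\psi$.
\end{itemize}

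What the paper's choice buys: taking $M=\psi$ minimizes $\Delta$ for given link reliabilities, since each unused path contributes $1\ge 1-\prod(\cdot)$. So despite the ``worst-case'' wording, it is the most favorable selection. That is exactly what makes the paper's bound necessary over all selections: if it fails, then $\mathcal{J}\ge 1$ for every set of paths at the maximizing $(\mathbf{R},\mathbf{n})$.
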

\begin{proof}
	For a worst-case estimate of the upper bound on $\sigma^2$, which always guarantees the convergence bound in Theorem \ref{theorem1} under any RW chain, we make all $\psi$ chains participate in the aggregation, i.e., $M = \psi$. Since FedRW converges only when $\mathcal{J}<1$, we have $1-\frac{\mu}{L}+\frac{4 \sigma^2 \mu}{\psi L}\sum_{m \in \psi} (1- \prod_{q=0}^Q \sum_{i \in \mathcal{N}_{m, q}} a_{m, q}^i E_{m, q}^i) < 1$. Since $\mu>0$, and $L>0$, the shift gives $\sigma^2< \frac{\psi}{4\sum_{m \in \psi} (1- \prod_{q=0}^Q \sum_{i \in \mathcal{N}_{m, q}} a_{m, q}^i E_{m, q}^i)}$. To ensure convergence for all choices of $\mathbf{n}$ and $\mathbf{R}$, take $\sigma^2<\frac{\psi}{\max_{\mathbf{R}, \mathbf{n}}4\sum_{m \in \psi} (1- \prod_{q=0}^Q \sum_{i \in \mathcal{N}_{m, q}} a_{m, q}^i E_{m, q}^i)}$. Meanwhile, by Definition \ref{definition1}, $\sigma^2 \geq 1$.
\end{proof}

\begin{remark}
	\label{remark2}
	Proposition \ref{proposition1} reveals a coupling between wireless communication reliability and data heterogeneity. Specifically, as the heterogeneity level $\sigma^2$ increases, the system requires higher model transmission reliability $\sum_{m \in \psi} \prod_{q=0}^Q \sum_{i \in \mathcal{N}_{m, q}} a_{m, q}^i E_{m, q}^i$ to ensure convergence. This highlights that the convergence behavior of FedRW under Non-IID data is highly sensitive to wireless link quality. However, communication reliability cannot be improved simply by reducing the packet size $\mathbf{n}$. Although a smaller $\mathbf{n}$ lowers the PER $\varrho$, it increases the number of packets $\boldsymbol{\beta}$. Moreover, increasing $\mathbf{R}$ can enhance the success probability but at the cost of higher communication overhead and delay. Therefore, achieving robust convergence under high data heterogeneity requires a joint optimization of $\mathbf{n}$, $\mathbf{R}$, and $\boldsymbol{\beta}$, to improve communication quality and enhance the system's tolerance to larger $\sigma^2$.
\end{remark}

Based on Theorem \ref{theorem1}, we derive the convergence bound under ideal conditions.

\newtheorem{lemma}{Lemma}
\begin{lemma}
	\label{lemma_1}
	Under the same assumptions as Theorem \ref{theorem1}, with learning rate $\lambda = \frac{1}{L}$, the convergence upper bound without considering packet errors and RW chain selection is given by
	\begin{equation}
		\mathbb{E}\left[F\left(\mathbf{w}^{t+1}\right)-F\left(\mathbf{w}^*\right)\right] \leq \left(1-\frac{\mu}{L}\right)^t \mathbb{E}\left[F\left(\mathbf{w}^0\right)-F\left(\mathbf{w}^*\right)\right].
	\end{equation}
\end{lemma}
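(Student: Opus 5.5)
The plan is to read Lemma \ref{lemma_1} as the ideal-channel specialization of Theorem \ref{theorem1}. We then check that the two error-induced terms vanish in that regime. ``Without packet errors and RW chain selection'' means every link succeeds and every possible path participates. Concretely, we set $E_{m,q}^i = 1$, which corresponds to $\epsilon=0$ so that $\varrho_{m,q}^i=0$ and hence $p_{m,q}^i=1$ by (\ref{eq12}). We also take $M=\psi$, so that the $M$ chains exhaust $\Psi$. For each chain exactly one $a_{m,q}^i$ equals $1$ at each step. Hence $\sum_{i\in\mathcal{N}_{m,q}} a_{m,q}^i E_{m,q}^i = 1$ for every $q$, the product over $q=0,\dots,Q$ equals $1$, and
\[
\psi-\sum_{m=1}^{M}\prod_{q=0}^{Q}\sum_{i\in\mathcal{N}_{m,q}} a_{m,q}^i E_{m,q}^i = \psi-\psi = 0 .
\]

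Next we substitute this into Theorem \ref{theorem1} with $\lambda = 1/L$. The $\sigma^2$-dependent part of $\mathcal{J}$ vanishes, leaving $\mathcal{J} = 1-\mu/L$. The additive $\alpha^2$ error term also carries the factor $(\psi - \psi)=0$, so it vanishes as well. One caveat is that the geometric factor $(1-\mathcal{J}^t)/(1-\mathcal{J})$ must stay finite, which requires $\mathcal{J}\neq 1$. This holds because $\mu>0$. Moreover, $\mu\le L$ follows from combining Assumption \ref{assumption1} with the standard smoothness bound $\|\nabla F(\mathbf{w})\|^2\le 2L(F(\mathbf{w})-F(\mathbf{w}^*))$. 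Therefore $0\le \mathcal{J}<1$, the factor is bounded, and the term is exactly zero. What remains is precisely $(1-\mu/L)^t\,\mathbb{E}[F(\mathbf{w}^0)-F(\mathbf{w}^*)]$.

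The main obstacle is justifying that setting $M=\psi$ with all $E=1$ is legitimate within the proof of Theorem \ref{theorem1}. In that case the aggregation (\ref{eq19}) becomes exactly $\mathbf{w}^{t+1}=\mathbf{w}^t-\frac{\lambda}{\psi}\sum_{m\in\Psi}\nabla f_{i_{m,Q}}(\mathbf{w}^t_{m,Q-1}) = \mathbf{w}^t-\lambda\nabla F(\mathbf{w}^t)$ by the definition (\ref{eq18}). As a fallback, if invoking the theorem seems too indirect, I would prove the claim directly from this identity. By $L$-smoothness (Assumption \ref{assumption2}), a step with $\lambda=1/L$ gives $F(\mathbf{w}^{t+1})\le F(\mathbf{w}^t)-\frac{1}{2L}\|\nabla F(\mathbf{w}^t)\|^2$. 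The P{\L} inequality (Assumption \ref{assumption1}) then yields $F(\mathbf{w}^{t+1})-F(\mathbf{w}^*)\le(1-\mu/L)(F(\mathbf{w}^t)-F(\mathbf{w}^*))$. Taking expectations and unrolling the recursion gives the stated bound. This argument ignores the projection $\mathbf{\Pi}_\mathcal{W}$, in the same way the paper's aggregation form (\ref{eq19}) already does. The exponent convention ($t$ versus $t+1$) is taken to match Theorem \ref{theorem1}.
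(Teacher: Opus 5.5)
Your argument is correct and is essentially the paper's own proof. Taking $M=\psi$ with every chain delivered makes $\psi-\sum_{m=1}^{M}\prod_{q=0}^{Q}\sum_{i\in\mathcal{N}_{m,q}}a_{m,q}^iE_{m,q}^i=0$, so $\mathcal{J}=1-\mu/L$ and the $\alpha^2$ term in Theorem~\ref{theorem1} vanishes; your check that $(1-\mathcal{J}^t)/(1-\mathcal{J})$ stays finite needs only $\mu>0$ (your side claim $\mu\le L$ rests on an unconstrained smoothness bound and is not needed), and together with your direct gradient-descent-plus-P{\L} fallback it is a reasonable addition the paper omits.
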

\begin{proof}
	In the ideal case, all $M$ RW chains are selected and transmitted successfully. From (\ref{eq20}), we have $\sum_{m=1}^M \prod_{q=0}^Q \sum_{i \in \mathcal{N}_{m, q}} a_{m, q}^i E_{m, q}^i=M$, and $\psi=M$. Thus, $\psi -\sum_{m=1}^M \prod_{q=0}^Q \sum_{i \in \mathcal{N}_{m, q}} a_{m, q}^i E_{m, q}^i=0$.
\end{proof}

\begin{remark}
\label{remark3}
According to Lemma \ref{lemma_1}, FedRW achieves gap-free convergence to the optimal global model when RW path selection and packet transmission errors are not considered. This result aligns with traditional FL with local updates \cite{6Chen}, since both approaches utilize all available clients in each training round. This consistency is expected, as the idealized FedRW setting essentially reduces to conventional full client participation FL.
\end{remark}

\subsection{Problem Simplification}
Based on Theorem \ref{theorem1}, the impact of wireless factors on the error term of FedRW’s convergence bound can be minimized by optimizing the node selection matrix $\mathbf{a}$ of RW chains, the packet size $\mathbf{n}$, and the maximum number of retransmissions $\mathbf{R}$. By simplifying the error term, we have
\begin{equation}
	\label{eq26}
	\begin{aligned}
		& \frac{2 \alpha^2 }{\psi L (1-\mathcal{J})} \Bigg(\psi-\sum_{m=1}^M \prod_{q=0}^Q \sum_{i \in \mathcal{N}_{m, q}} a_{m, q}^i E_{m, q}^i\Bigg)\\
		& \quad =\frac{2 \alpha^2\left(\psi-\sum_{m=1}^M \prod_{q=0}^Q \sum_{i \in \mathcal{N}_{m, q}} a_{m, q}^i E_{m, q}^i \right)}{\psi \mu-4 \sigma^2 \mu\left(\psi-\sum_{m=1}^M \prod_{q=0}^Q \sum_{i \in \mathcal{N}_{m, q}} a_{m, q}^i E_{m, q}^i \right)}.
	\end{aligned}
\end{equation}

Since $\psi \mu > 0$, $\sigma^2 \mu > 0$, and $\alpha^2 \geq 1$, it is evident that minimizing the error term is equivalent to minimizing $\psi-\sum_{m=1}^M \prod_{q=0}^Q \sum_{i \in \mathcal{N}_{m, q}} a_{m, q}^i E_{m, q}^i$, which in turn corresponds to maximizing $\sum_{m=1}^M \prod_{q=0}^Q \sum_{i \in \mathcal{N}_{m, q}} a_{m, q}^i E_{m, q}^i$. We define $\mathcal{Q} = \{0,1,\dots,Q\}$ and denote the server by $i_s$. Therefore, the original optimization problem in (\ref{eq14}) can be reformulated in the following general form.
\begin{align}
	\max _{\mathbf{a}, \mathbf{R}, \mathbf{n}} & \sum_{m=1}^M \prod\limits_{q=0}^Q \sum\limits_{i \in \mathcal{N}_{m, q}} a_{m, q}^i E_{m, q}^i \label{eq27}\\
	\text {s.t.}&~a_{m, q}^i \in\{0,1\}, ~\forall m \in \mathcal{M}, ~\forall q \in \mathcal{Q}, ~\forall i \in \mathcal{N}_{m, q}, \tag{\ref{eq27}{a}} \label{eq27a}\\
	&~ \sum_{i \in \mathcal{N}_{m, q}} a_{m, q}^i=1, ~\forall m \in \mathcal{M}, ~\forall q \in \mathcal{Q}, \tag{\ref{eq27}{b}} \label{eq27b} \\
	&~ \sum_{m=1}^M \sum_{q=0}^{Q} a_{m, q}^i \leq 1, \forall i \in \mathcal{U}, \tag{\ref{eq27}{c}} \label{eq27c}\\
	&~ i_{m,0} = i_s, ~\mathcal{N}_{m, Q}=\{i_s\}, ~\forall m \in \mathcal{M}, \tag{\ref{eq27}{d}} \label{eq27d}\\
	&~ (\ref{eq14b}) - (\ref{eq14e}), \nonumber
\end{align}
where (\ref{eq27a}) and (\ref{eq27b}) ensure that each RW chain selects exactly one node at each step, and (\ref{eq27c}) ensures that each client is selected at most once across all RW chains. (\ref{eq27d}) guarantees that both the initial and final nodes of each chain correspond to the server $i_s$. The original constraints on $\mathbf{S}$ are equivalently reformulated using the binary decision variables $\mathbf{a}$, enabling a more tractable formulation. The objective of optimization problem (\ref{eq27}) is to maximize the overall wireless link reliability across $M$ chains, thereby reducing the upper bound of the global training loss and accelerating model convergence.

\section{Joint RW Path Selection and Transmission Parameter Optimization}
\label{Section5}
In FedRW, centralized optimization is impractical due to the high communication cost of coordinating global information at the server. Instead, each node must make local decisions independently. To reduce computational complexity and avoid delays caused by slow nodes, we decompose the original problem in (\ref{eq27}) into three distributed subproblems, each solvable locally using only neighbor information.

First, for a fixed maximum number of retransmissions, we determine the optimal packet size for each wireless link in the RW chain. Based on the resulting packet sizes, we compute the corresponding optimal maximum number of retransmissions. Finally, we identify the RW path that minimizes the overall training loss while ensuring reliable communication.

\subsection{Optimal Packet Size}
Since the packet size $n_{m,q}$ only affects the transmission performance of its own wireless link, we can determine it independently. Given a fixed $\mathbf{R} \in \mathbb{N}^{M\times|\mathbf{S}_m|}$ with $\mathbf{R} \leq [\gamma_R]^{M\times|\mathbf{S}_m|}$, the subproblem of selecting the optimal packet size for transmitting the model from node $i_{m,q}$ to a neighbor $i_{m,q+1} \in \mathcal{N}_{m,q}$ can be formulated as
\begin{align}
	\max _{n_{m, q}}& \left(1-\left(1-(1-\epsilon)^{n_{m, q}}\right)^{R+1}\right)^{\beta_{m, q}}  \label{eq28}\\
	\text {s.t.}&~(\ref{eq14c}), (\ref{eq14d}), \nonumber\\
	&~i_{m, q+1} \in \mathcal{N}_{m, q}, ~\forall m \in \mathcal{M}, ~\forall q \in \mathcal{Q},\tag{\ref{eq28}{a}} \label{eq28a}\\
	&~n_{m, q} \leq \gamma_n, ~n_{m, q} \in \mathbb{N}^{+}, ~\forall m \in \mathcal{M}, ~\forall q \in  \mathcal{Q},\tag{\ref{eq28}{b}} \label{eq28b}
\end{align}
where $\gamma_n = \lfloor -\log ( \frac{1 - p_\varrho}{\epsilon} ) \rfloor$ is used to limit the upper bound of PER $p_\varrho$. The delay constraint implicitly enforces a lower bound on packet size to avoid low transmission efficiency. 

Since $\beta$ changes in a stepwise manner as $n$ increases, direct optimization requires handling a discrete and non-smooth objective, which is computationally challenging. To address this, we adopt a logarithmic sampling strategy for $\beta$, as the marginal gain in packet success probability diminishes when packets are short.

For $\lceil \frac{m_0}{\gamma_n} \rceil \leq \beta \leq \lceil \frac{m_0}{n_\tau} \rceil$, where $n_\tau$ is the minimum packet size imposed by the delay constraint, we logarithmically sample $\phi$ values of $\beta$. For each sampled $\beta$, the corresponding optimal packet size $n^*$ is determined using Proposition~\ref{proposition2}, yielding a set of feasible $(n^*, \beta)$ pairs. These pairs are then checked in descending order of $\beta$. Since larger $\beta$ corresponds to smaller $n^*$, if a given $\beta$ satisfies the delay constraint, all smaller $\beta$ will also satisfy it. Finally, we select the feasible $(n^*, \beta^*)$ pair that maximizes the objective.

This approach leverages the fact that the objective is non-monotonic in $n$ because $n$ affects $\beta$. However, once $\beta$ is fixed, the objective increases as $n$ decreases. Therefore, the optimal strategy is to choose the smallest feasible $n$ for each $\beta$ under the delay constraint. The following proposition provides the method to determine the optimal $n$ for a given $\beta$.

\begin{proposition}
	\label{proposition2}
	Given a total model size $m_0$, a fixed number of packets $\beta$, and a maximum number of retransmissions $R$, the optimal packet size $n^*(\beta)$ is given by 
	\begin{equation}
		\label{eq29}
		n^*(\beta)=\max \left\{\min \left\{n_{\Delta}, \gamma_n\right\},\left\lceil\frac{m_0}{\beta}\right\rceil+e\right\},
	\end{equation}
	where $n_{\Delta}$ is the packet size such that $\tau(n_{\Delta}, \beta) =\frac{n_{\Delta} \beta}{r} (\sum_{r=1}^{R+1} r \varrho^{r-1}(n_{\Delta})(1-\epsilon)^{n_{\Delta}}+(R+1) \varrho^{R+1}(n_{\Delta}))$ is closest to $\gamma_\tau$ and satisfies $\tau(n_{\Delta}, \beta) \leq \gamma_\tau$.
\end{proposition}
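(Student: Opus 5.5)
The plan is to read Proposition~\ref{proposition2} as a statement about a single $\beta$-slice of problem (\ref{eq28}). First I describe the set of packet sizes that are admissible for the fixed $\beta$ and $R$. Then I show that (\ref{eq29}) is the point of that set picked out by the delay constraint (\ref{eq14c}), after clipping by the PER cap (\ref{eq28b}) and projecting onto the coverage constraint (\ref{eq14d}).

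\textbf{Step 1: monotonicity of the delay.} For fixed $\beta$ and $R$, write
\[
\tau(n,\beta)=\frac{n\beta}{r}\,g(n),
\]
where $g(n)$ is the expected number of transmissions (\ref{eq5}) with $\varrho(n)=1-(1-\epsilon)^n$. Summing the series in closed form gives
\[
g(n)=\sum_{k=0}^{R}\varrho(n)^k .
\]
This is nondecreasing in $\varrho$, and $\varrho$ is increasing in $n$ by (\ref{eq4}). Since the prefactor $n\beta/r$ is also increasing in $n$, $\tau(\cdot,\beta)$ is strictly increasing in $n$. Hence the delay-feasible set $\{n\in\mathbb{N}^+:\tau(n,\beta)\le\gamma_\tau\}$ is an initial segment $\{1,\dots,n_\Delta\}$. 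Here $n_\Delta$ is exactly the integer described in the statement: the largest $n$ with $\tau\le\gamma_\tau$, which is therefore the one closest to $\gamma_\tau$ from below. It can be located by bisection.

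\textbf{Step 2: the remaining constraints.} Constraint (\ref{eq28b}) caps $n$ at $\gamma_n$, so the delay- and PER-admissible packet sizes are those with $n\le\min\{n_\Delta,\gamma_n\}$. Constraint (\ref{eq14d}) with $\beta$ packets, each carrying payload $n-e$, requires $\beta(n-e)\ge m_0$, that is, $n\ge\lceil m_0/\beta\rceil+e$. Combining the two, the $\beta$-slice of (\ref{eq28}) consists of the integers in
\[
\bigl[\lceil m_0/\beta\rceil+e,\ \min\{n_\Delta,\gamma_n\}\bigr].
\]

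\textbf{Step 3: the selection rule.} I would then argue that the rule of the proposition takes the delay-tight, PER-capped size $\min\{n_\Delta,\gamma_n\}$. This size uses the whole delay budget $\gamma_\tau$ on the link, so each packet carries the most payload the budget allows. The rule then projects this size upward onto the coverage bound, so the $\beta$ packets always carry the full model of $m_0$ bits. That projection is precisely the outer $\max$ in (\ref{eq29}). When $\min\{n_\Delta,\gamma_n\}\ge\lceil m_0/\beta\rceil+e$, the rule returns the upper endpoint of the interval. Otherwise it returns the coverage bound, which is the boundary point at which the surrounding sampling procedure tests delay feasibility and discards the $\beta$ if the test fails.

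\textbf{Main obstacle.} The hard part is justifying the upper endpoint against the objective of (\ref{eq28}). For fixed $\beta$ the success probability $(1-\varrho(n)^{R+1})^{\beta}$ is decreasing in $n$. That monotonicity on its own favours the lower endpoint, as the paper itself remarks before the proposition. My first attempt would restore consistency by using the coupling $\beta=\lceil m_0/(n-e)\rceil$. Under this coupling, $n$ can vary only inside the band where the ceiling stays constant. I would then show that the upper end of the band is the representative needed so that, when the sampled $\beta$ values are compared, each candidate is paired with the packet size that actually realises that packet count within the delay budget. If this coupling argument cannot be closed, the step that must be stated as an explicit modelling choice is the claim that the delay-tight point is the slice's representative. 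Steps 1 and 2 hold in any case.
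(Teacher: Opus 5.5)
Your Steps 1 and 2 are sound, but Step 3 does more than lack a justification: it picks the wrong endpoint. After Steps 1--2, the fixed-$\beta$ slice of (\ref{eq28}) is to maximize $(1-\varrho(n)^{R+1})^{\beta}$ over the integers in $\bigl[\lceil m_0/\beta\rceil+e,\ \min\{n_\Delta,\gamma_n\}\bigr]$. Since $\varrho(n)=1-(1-\epsilon)^n$ is strictly increasing, this objective is strictly decreasing in $n$, so the unique maximizer is the left endpoint $\lceil m_0/\beta\rceil+e$. Whenever the interval has more than one point, your rule returns the worst feasible size. The heuristic that the delay-tight size ``carries the most payload'' does not help. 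With $\beta$ fixed, any payload beyond $\lceil m_0/\beta\rceil$ bits is padding that only raises the PER, and the upper half of (\ref{eq14d}) exists to forbid such redundancy. The coupling repair cannot close either. On the band where $\lceil m_0/(n-e)\rceil=\beta$ the packet count is constant, so the same monotonicity applies. The band's lower end is exactly $\lceil m_0/\beta\rceil+e$, and by your Step 1 the delay is also smallest there. The upper end is therefore dominated on both counts, and calling it a modelling choice gives up the optimality claim rather than proving it.

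The paper's proof rests entirely on that monotonicity: the objective of (\ref{eq28}) is strictly decreasing in $n$ for fixed $\beta$ and $R$, so the smallest feasible $n$ is optimal. It then reads the delay constraint as a \emph{lower} bound on $n$ (``too small a value would violate it''). Under that reading, $n_\Delta$ is the smallest delay-feasible size and (\ref{eq29}) is the largest lower bound, clipped at $\gamma_n$. Your Step 1 correctly shows that this lower-bound reading fails for a genuinely fixed $\beta$, where $\tau(\cdot,\beta)$ is increasing. The reading comes instead from the coupled trade-off $\beta=\lceil m_0/(n-e)\rceil$, where smaller packets mean more header bits; this is how $n_\tau$ and the descending-$\beta$ feasibility check are used in the surrounding text. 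So the tension you noticed is real, but the missing idea is to resolve it in favour of the smallest feasible size, not the delay-tight one. With $n_\Delta$ acting as a lower bound, which for fixed $\beta$ is slack by your Step 1, the outer $\max$ in (\ref{eq29}) returns $\lceil m_0/\beta\rceil+e$, the left endpoint of your interval. The sampling loop then checks its delay feasibility. Under your literal reading of $n_\Delta$ as the largest delay-feasible size, (\ref{eq29}) is the maximizer only when your interval is the single point $\{\lceil m_0/\beta\rceil+e\}$, so no argument could have completed Step 3 as you framed it.
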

\begin{proof}	
	For given $m_0$, $R$, and $\beta$, the objective in (\ref{eq28}) is strictly decreasing in $n$, so the smallest feasible $n$ maximizes it. The delay constraint $\tau(n) \leq \gamma_\tau$ imposes a lower bound on $n$, as too small a value would violate it. Thus, $n_{\Delta}$ is the smallest $n$ that strictly satisfies the delay constraint without exceeding $\gamma_n$, ensuring the PER remains acceptable. Additionally, $n$ must satisfy the constraint in (\ref{eq14d}), which requires the payload $\zeta$ of each packet to be no less than $\lceil \frac{m_0}{\beta} \rceil$.
\end{proof}

\begin{remark}
	\label{remark4}
	Proposition \ref{proposition2} shows that $n^*(\beta)$ for a given $\beta$ is determined by $\epsilon$ and model size $m_0$. The objective favors smaller $n^*(\beta)$ to improve transmission success under the delay constraint $\tau(n^*(\beta)) \leq \gamma_\tau$. However, this preference faces two key trade-offs: 1) A larger $m_0$ increases the total bits to transmit. To meet the delay constraint, $n^*(\beta)$ must increase to achieve higher data rates, but this also raises packet failure probability, reducing reliability. 2) A worse channel (e.g., larger $\epsilon$) favors smaller $n^*(\beta)$ for better packet reliability $(1 - \epsilon)^{n^*(\beta)}$, but this leads to more packets and may degrade training performance. Moreover, since $n^*(\beta) \leq \gamma_n$, a small $p_\varrho$ imposes a tighter $n$ bound. If no feasible $n$ satisfies delay constraint, the link is deemed unavailable. While retransmissions can mitigate high error rates, overly conservative $p_\varrho$ values risk unnecessary link exclusion. 
\end{remark}

\subsection{Optimal Maximum Number of Retransmissions}
Based on the optimal $(\mathbf{n}^*, \boldsymbol{\beta}^*)$ pairs from (\ref{eq28}), the subproblem of determining the optimal maximum number of retransmissions for transmitting the model from node $i_{m,q}$ to neighbor $i_{m,q+1} \in \mathcal{N}_{m,q}$ is formulated as
\begin{align}
	\max _{R_{m, q}}& \left(1-\varrho^{R_{m, q}+1}\left(n_{m, q}^*\right)\right)^{\beta_{m, q}^*} \label{eq30}\\
	\text {s.t.}&~(\ref{eq14c}), (\ref{eq14e}), \nonumber\\
	&~ i_{m, q+1} \in \mathcal{N}_{m, q}, ~\forall m \in \mathcal{M}, ~\forall q \in \mathcal{Q}, \tag{\ref{eq30}{a}} \label{eq30a} \\
	&~ R_{m, q} \in \mathbb{N}, ~\forall m \in \mathcal{M}, ~\forall q \in \mathcal{Q}. \tag{\ref{eq30}{b}} \label{eq30b}
\end{align}

We determine the optimal maximum number of retransmissions for per wireless link using the following proposition.
\begin{proposition}
	\label{proposition3}
	Given the optimal $n^*$ and $\beta^*$, the optimal maximum number of retransmissions $R^*$ is given by
	\begin{equation}
		\label{eq31}
		R^*=\min \left\{R_{\Delta}, \gamma_R\right\},
	\end{equation}
	where $R_{\Delta}$ is the largest $R$ satisfying $\tau(R_{\Delta}) \leq \gamma_\tau$ and closest to $\gamma_\tau$, with $\tau(R_{\Delta}) =\frac{n^* \beta^*}{r} (\sum_{r=1}^{R_{\Delta}+1} r \varrho^{r-1}(n^*)(1-\epsilon)^{n^*}+(R_{\Delta}+1) \varrho^{R_{\Delta}+1}(n^*))$.
\end{proposition}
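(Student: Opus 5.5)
With $n^*$ and $\beta^*$ fixed, I would argue in the same way as the proof of Proposition~\ref{proposition2}. The plan has three steps: show the objective in (\ref{eq30}) is monotone in $R$, show the delay constraint (\ref{eq14c}) is monotone in $R$, and then intersect the resulting feasible set with (\ref{eq14e}).

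\textbf{Step 1 (objective).} Fix $n^*$, so the PER $\varrho=\varrho(n^*)=1-(1-\epsilon)^{n^*}$ from (\ref{eq4}) lies in $(0,1)$ whenever $0<\epsilon<1$. Then $\varrho^{R+1}$ is strictly decreasing in $R$. Hence $1-\varrho^{R+1}$ is strictly increasing in $R$ and lies in $(0,1)$. Raising it to the fixed positive power $\beta^*$ preserves strict monotonicity, so the objective of (\ref{eq30}) is strictly increasing in $R\in\mathbb{N}$. The optimum is therefore the largest feasible $R$.

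\textbf{Step 2 (delay constraint).} I would show that $\tau(R)$ is nondecreasing in $R$ for fixed $n^*,\beta^*$, so that $\{R:\tau(R)\le\gamma_\tau\}$ is a lower set $\{0,1,\dots,R_\Delta\}$.

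The cleanest route is to rewrite the expected number of transmissions in (\ref{eq5}) in tail-sum form. The number of attempts $X$ is truncated at $R+1$, so $\Pr(X\ge r)=\varrho^{r-1}$ for $1\le r\le R+1$. This gives $\mathbb{E}[R]=\sum_{r=1}^{R+1}\varrho^{r-1}=\frac{1-\varrho^{R+1}}{1-\varrho}$. Going from $R$ to $R+1$ adds the positive term $\varrho^{R+1}$.

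Consequently $\tau(R)=\beta^*\delta\,\mathbb{E}[R]$, from (\ref{eq6}), is strictly increasing in $R$. It is also bounded above by $\beta^*\delta/(1-\varrho)$.

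\textbf{Step 3 (combine).} By Step 2, every $R\le R_\Delta$ satisfies the delay constraint and every $R>R_\Delta$ violates it. Since $\tau$ is increasing, $R_\Delta$ is the largest $R$ with $\tau(R)\le\gamma_\tau$, i.e.\ the one making $\tau$ closest to $\gamma_\tau$ from below. If $\tau$ never exceeds $\gamma_\tau$, set $R_\Delta=\infty$.

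Intersecting with (\ref{eq14e}), the feasible set is $\{0,\dots,\min\{R_\Delta,\gamma_R\}\}$. By Step 1 the maximizer is its largest element, which gives (\ref{eq31}).

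\textbf{Main obstacle.} The hardest step is Step 2: matching the paper's written form of $\tau(R_\Delta)$ to the tail-sum identity. The written form has a factor $(1-\epsilon)^{n^*}=1-\varrho$ in the sum, and I expect it can be identified with (\ref{eq5}) via the telescoping computation above. I would check that identity directly by summing the arithmetic–geometric series.

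Feasibility also needs a remark. Since $n^*,\beta^*$ come from Proposition~\ref{proposition2}, which was computed for a given $R$ satisfying the delay constraint, the feasible set is nonempty. In particular $R=0$ is feasible whenever $\beta^*\delta\le\gamma_\tau$; otherwise the link is declared unavailable, as in Remark~\ref{remark4}.
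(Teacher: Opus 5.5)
Your proof is correct and follows the paper's argument: the objective $\bigl(1-\varrho^{R+1}(n^*)\bigr)^{\beta^*}$ is strictly increasing in $R$, so the optimum is the largest $R$ allowed by (\ref{eq14c}) and (\ref{eq14e}). Your Step 2 adds something the paper's one-line proof uses without stating it. The tail-sum identity $\mathbb{E}[R]=\frac{1-\varrho^{R+1}}{1-\varrho}$ shows that $\tau$ is strictly increasing in $R$, so the delay-feasible set is $\{0,\dots,R_\Delta\}$, and this is exactly what guarantees that $\min\{R_\Delta,\gamma_R\}$ is feasible.
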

\begin{proof}
	Since $\epsilon \in (0,1)$, the objective in (\ref{eq30}) increases monotonically with $R$. Thus, the largest $R$ that satisfies $\tau(R) \leq \gamma_\tau$ maximizes the objective while respecting the delay constraint and the bound in (\ref{eq14e}).
\end{proof}

\begin{remark}
	\label{remark5}
	Proposition \ref{proposition3} shows that the optimal $R^*$ depends on both $\epsilon$ and the amount of transmitted data $n^*\beta^*$. A larger $n^*\beta^*$ tightens the delay constraint, potentially requiring a smaller $R^*$, which increases the risk of transmission failure. Conversely, a larger $\epsilon$ requires more retransmissions to offset higher packet error. Therefore, selecting optimal $(n^*, \beta^*)$ pair and $R^*$ reflects a fundamental trade-off among transmission efficiency, channel reliability, and training loss over unreliable wireless networks.
\end{remark}

\subsection{Optimal RW Path Selection}
Based on the optimal $\mathbf{n}^*$, $\boldsymbol{\beta}^*$ and $\mathbf{R}^*$, the optimization problem in (\ref{eq27}) can be simplified as
\begin{align}
	\max _{\mathbf{a}} & \sum_{m=1}^M \prod_{q=0}^Q \sum_{i \in \mathcal{N}_{m, q}} a_{m, q}^i E_{m, q}^i \label{eq33}\\
	\text {s.t.}&~(\ref{eq27a}) - (\ref{eq27d}). \nonumber
\end{align}

Formally, this problem resembles a maximum weight bipartite matching or a minimum cost flow problem. However, the objective involves nonlinear products across multiple hops, making matching or flow-based solutions inapplicable. Moreover, in sparse wireless topologies with local link constraints, centralized coordination would incur excessive communication overhead and delay. Therefore, we develop a more scalable distributed solution.

To enable efficient node selection in wireless FedRW, we propose a resilience-aware beam search with dynamic pruning strategy. This strategy considers both current link reliability and the extendibility of the path. For node $i_{m,q}$, the set of next-hop nodes is defined as
\begin{equation}
	\label{eq34}
	\mathcal{Z}_{m, q+1}=\left\{i_{m, q+1} \in \chi_{m, q} \mid i_{m, q+1} \notin \mathcal{A}, \chi_{m, q} \subseteq \mathcal{N}_{m, q}\right\},
\end{equation}
where $\chi_{m, q}$ denotes the set of neighbors filtered by (\ref{eq28}) and (\ref{eq30}) to exclude nodes that cannot satisfy the delay constraint even with optimized transmission parameters. $\mathcal{A}$ denotes the set of clients already selected by RW chains in the current round to satisfy constraint (\ref{eq27c}).

Given a beam size $\rho$, we select the top-$\rho$ candidate nodes $X_{m, q+1}=\left\{i_{m, q+1}^1, i_{m, q+1}^2, \ldots, i_{m, q+1}^\rho\right\}$ from $\mathcal{Z}_{m,q+1}$ with the highest link reliability, where the nodes are sorted in descending order of reliability. The core idea of dynamic pruning is to restrict beam expansion to only the most promising candidates based on current network conditions. The next-hop node selection rule is 
\begin{equation}
	\label{eq35}
	i_{m,q+1}=\left\{\begin{array}{ll}
		i_{m, q+1}^{\xi^*}, \quad \text {if}~ \exists \xi^*=\min \left\{\xi : \mathcal{N}_{m, q+1}^{\xi}=\rho\right\}, \\
		i_{m, q+1}^{\xi^{\dagger}}, \quad \text {if}~ \nexists \xi^*, ~ \exists \xi^{\dagger}=\argmax\limits_{\xi : \mathcal{N}_{m, q+1}^{\xi}>0} \mathcal{N}_{m, q+1}^{\xi}, \\
		\emptyset, \quad \text{otherwise},
	\end{array}\right.
\end{equation}
where $\xi=\{1,2,\ldots,\rho\}$, and $\mathcal{N}_{m, q+1}^{\xi}$ is the number of available neighbors of $i_{m, q+1}^{\xi}$ for the $\xi$-th node in $X_{m, q+1}$.

The selection function (\ref{eq35}) first searches the candidate set $X_{m,q+1}$ in descending order of link reliability to find the most reliable node that can expand to $\rho$ neighbors. If none is found, it selects the one with the largest number of available neighbors. If no candidates have any valid next-hop neighbors, the path is terminated. Once a node is selected, all other candidates in the same beam are pruned to reduce branching overhead.

While the selection in (\ref{eq35}) is deterministic per decision instant, the process remains a random walk as the neighbor set and link reliability are stochastic observations of random topologies and fading channels. By adapting to these dynamic inputs, the algorithm generates a path that is a stochastic realization of the evolving network state.

Algorithm \ref{algorithm2} summarizes the joint RW path selection and transmission parameter optimization algorithm over unreliable wireless networks.
\begin{algorithm}[t]
	\caption{Proposed FedRW Over Unreliable Wireless Networks}
	\label{algorithm2}
	\KwIn{$\lambda$, $M$, $Q$, $\gamma_\tau$, $\gamma_R$, $p_\varrho$, $\phi$, $\rho$}
	\KwOut{$\mathbf{w}^T$}
	\SetKwInOut{KwIn}{Server executes}
	\KwIn{}
	Initialize global model $\mathbf{w}^0$\;
	\For{$t = 0, 1, 2, \ldots$}{
		Server calculates optimal $n^*$, $\beta^*$ and $R^*$ for each of its neighbors\;
		Select $M$ initial clients by (\ref{eq35}) and broadcast $\mathbf{w}^t$ to them\;
		\For{$m = 1, 2, \ldots, M$ \textbf{in parallel}}{
			$\mathbf{w}_{m,Q}^t\leftarrow \texttt{RWUpdateWirel} (m, \mathbf{w}^t)$\;
			Aggregate $\mathbf{w}^{t+1}$ via (\ref{eq11}).
		}
	}
	\SetKwInOut{KwIn}{$\texttt{RWUpdateWirel}(m, \mathbf{w}^t)$}
	\KwIn{}
	\For{$q = 1, 2, \ldots, Q$}{
		Client $i_{m,q}$ updates $\mathbf{w}_{m,q}^t$ via (\ref{eq9}) and calculates optimal $n^*$, $\beta^*$ and $R^*$ for each of its neighbors\;
		Select $i_{m,q+1}$ by (\ref{eq35}) and transmit $\mathbf{w}_{m,q}^t$ to it;
	}
	\Return $\mathbf{w}_{m,Q}^t$ to server\;
\end{algorithm}

\subsection{Implementation and Complexity}
In wireless FedRW, each node estimates the SNR to its neighbors and computes the transmission rate. Given the delay bound $\gamma_\tau$, retransmission limit $\gamma_R$, and packet error constraint $p_\varrho$, the optimal transmission parameters $(\mathbf{n}^*, \boldsymbol{\beta}^*)$ and $\mathbf{R}^*$ are determined. Wireless links that violate the delay constraint are removed, resulting in the feasible neighbor set $\chi$. The server maintains a global set $\mathcal{A}$ to track selected clients. After each local update, the current node sends the next client identifier to the server. If no direct link exists, this information is relayed through reverse chaining. For next-hop selection, each node uses $\chi$ and $\mathcal{A}$ to construct the candidate set $\mathcal{Z}$ and selects the top-$\rho$ nodes $X$ with the highest link reliability. Redundant branches are pruned based on condition (\ref{eq35}).

According to Proposition 2, each node performs a binary search over $n \in [n_\tau, \gamma_n]$. Each step has complexity $\mathcal{O}(R)$, and a total of $\mathcal{O}(\log_2 (\gamma_n-n_\tau))$ iterations are required. This is repeated $\phi$ times for feasible $\beta$, resulting in a total complexity of $\mathcal{O}(\phi \cdot R \cdot \bar{d} \cdot \log_2 (\gamma_n-n_\tau))$, where $\bar{d}$ is the average node degree.
Similarly, Proposition 3 uses binary search over $R \in [0, \gamma_R]$ with complexity $\mathcal{O}(\gamma_R \cdot \bar{d} \cdot \log_2 \gamma_R)$. In the resilience-aware beam search, each node selects the top-$\rho$ candidates from $\mathcal{Z}$ with complexity $\mathcal{O}(z \log z)$, where $z = |\mathcal{Z}|$. Evaluating the extendibility of each candidate costs $\mathcal{O}(\bar{d})$, and pruning stops early once the first expandable candidate is found. Thus, on average, only $\rho_p \ll \rho$ candidates are evaluated. The total complexity is $\mathcal{O}(z \log z + \rho_p \bar{d})$, which is near-linear in practice as $\rho_p$ is typically small.

The modular design enables each component to be optimized independently, facilitating future extensions such as learning-based path selection or adaptive transmission under dynamic networks, and the integration of quantum algorithms to improve both computational and communication efficiency.

FedRW mitigates the impact of channel state information (CSI) staleness through a proactive local sensing mechanism. Under a block fading channel, the current node performs an instantaneous SNR probe for all neighbors before each transition. Based on these measurements, the node subsequently computes the optimal transmission parameters and reliability to execute the beam search selection. Because the computational complexity is near-linear, the delay between sensing and transmission is negligible compared to the channel coherence time. This ensures that the selected parameters remain valid throughout the duration of each hop.

\subsection{Optimality of Problem Decomposition}
To rigorously justify the proposed decomposition, we provide a proof of equivalence between the joint optimization problem and the distributed subproblems.

\begin{proposition}
	\label{proposition_equivalence}
	The joint optimization problem in (\ref{eq27}) is equivalent to maximizing the reliability $E_{m,q}^i$ for each potential wireless link independently, followed by solving the node selection matrix $\mathbf{a}$ based on these maximized values.
\end{proposition}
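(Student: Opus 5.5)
The plan is to exploit the separable, monotone structure of the objective in (\ref{eq27}). For fixed $\mathbf{a}$, the objective depends on the transmission parameters only through the per-link reliabilities $E_{m,q}^i=(p_{m,q}^i)^{\beta_{m,q}^i}\in[0,1]$. The constraints (\ref{eq14b})--(\ref{eq14e}) couple $(n_{m,q}^i,R_{m,q}^i)$ only within the link $(i_{m,q},i)$, since each involves a single hop's delay $\tau$, bit budget $\Lambda$, and retransmission cap. The feasible set of the joint problem therefore factorizes as
\[
\mathcal{F}_{\mathbf{a}}\times\prod_{(m,q,i)}\mathcal{F}_{m,q}^i ,
\]
where $\mathcal{F}_{\mathbf{a}}$ is given by (\ref{eq27a})--(\ref{eq27d}) and $\mathcal{F}_{m,q}^i$ is the per-link set of admissible $(n,R)$. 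This uses the orthogonal subchannel assumption, so that $r_{i,j}$ and hence $\tau$ do not depend on other links' parameters.

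With this factorization, I would write $\max_{\mathbf{a},\mathbf{R},\mathbf{n}}=\max_{\mathbf{a}}\max_{\mathbf{R},\mathbf{n}}$. The key step is then to show that, for every fixed feasible $\mathbf{a}$, the inner maximum is attained by choosing each link's parameters to maximize its own $E_{m,q}^i$. Define
\[
\Phi(\mathbf{a},\mathbf{E})=\sum_{m=1}^M \prod_{q=0}^Q \sum_{i\in\mathcal{N}_{m,q}} a_{m,q}^i E_{m,q}^i .
\]
This is a sum of products of nonnegative linear forms with nonnegative coefficients, so $\Phi$ is coordinatewise nondecreasing in every $E_{m,q}^i$ on $[0,1]$. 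By (\ref{eq27b}) each inner sum reduces to the single selected $E_{m,q}^{i}$. Let $\bar E_{m,q}^i=\max_{\mathcal{F}_{m,q}^i}E_{m,q}^i$, which exists because $n\le\gamma_n$ and $R\le\gamma_R$ make each $\mathcal{F}_{m,q}^i$ finite. Then for any feasible $(\mathbf{R},\mathbf{n})$ we have $\Phi(\mathbf{a},\mathbf{E})\le\Phi(\mathbf{a},\bar{\mathbf{E}})$, and the upper bound is attained by choosing each link's maximizer. Hence $\max_{\mathbf{a},\mathbf{R},\mathbf{n}}\Phi=\max_{\mathbf{a}}\Phi(\mathbf{a},\bar{\mathbf{E}})$, which is exactly problem (\ref{eq33}) with the maximized reliabilities. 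The converse direction follows the same way: any optimal $\mathbf{a}$ of (\ref{eq33}), paired with the per-link maximizers, is optimal for (\ref{eq27}), because the two problems have equal optimal values.

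It remains to connect $\bar E$ to Propositions \ref{proposition2} and \ref{proposition3}. I expect this to be the main obstacle, because those propositions optimize $n$ (via sampled $\beta$) for a fixed $R$ and then $R$ for the resulting $n$. That sequential procedure is not automatically a joint maximizer over $(n,R)$. For the equivalence statement itself I would therefore define the per-link subproblem as the joint per-link maximization over $\mathcal{F}_{m,q}^i$, so that the argument above is exact. I would then separately remark that the proposed alternating procedure solves this per-link problem, using the monotonicity facts established in the proofs of Propositions \ref{proposition2} and \ref{proposition3}: $E$ is decreasing in $n$ for fixed $\beta$ and increasing in $R$. This claim is exact only when the per-link search over $(\beta,R)$ is exhaustive; with logarithmic sampling of $\beta$ it is an approximation.

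A secondary check is that (\ref{eq27c}) and (\ref{eq27d}) constrain only $\mathbf{a}$, and that a link which is infeasible for all $(n,R)$ can be assigned $\bar E=0$, which is equivalent to its exclusion from $\chi_{m,q}$. With that check the factorization holds and the proof is complete.
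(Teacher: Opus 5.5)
Your argument is essentially the paper's: because (\ref{eq14c})--(\ref{eq14e}) are link-local and the objective of (\ref{eq27}) is coordinatewise nondecreasing in each $E_{m,q}^i\ge 0$, the maximization over $(\mathbf{n},\mathbf{R})$ can be pushed inside the sum and product to give (\ref{eq33}), and your caveat that the sequential, $\beta$-sampled procedure of Propositions \ref{proposition2}--\ref{proposition3} only approximates the per-link joint maximizer matches the paper's own remark on practical approximations. One correction to your secondary check: assigning $\bar E=0$ to a link with empty feasible set is not the same as excluding it, since under (\ref{eq27b})--(\ref{eq27c}) a chain could then be parked on that link (contributing zero) instead of being forced onto a feasible route that competes with other chains for clients, which (\ref{eq27}) forbids and which can change the optimal value; the exact reduction, and the condition under which your factorization $\mathcal{F}_{\mathbf{a}}\times\prod\mathcal{F}_{m,q}^i$ holds, is to replace $\mathcal{N}_{m,q}$ by the filtered sets $\chi_{m,q}$ of (\ref{eq34}), an assumption the paper's proof also leaves implicit.
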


\begin{proof}
	Since the constraints (\ref{eq14c})$-$(\ref{eq14e}) are link-specific, the feasible region for each $(n_{m,q}, R_{m,q})$ is strictly local. Importantly, there are no coupling constraints that link the parameter choices of different wireless links. Thus, the global joint maximization can be nested as
	\begin{equation}
		\label{eq_proof_nested}
		\max_{\mathbf{a}} \left( \sum_{m=1}^M \prod_{q=0}^Q \sum_{i \in \mathcal{N}_{m,q}} a_{m,q}^i \left[ \max_{n_{m,q}, R_{m,q}} E_{m,q}^i \right] \right).
	\end{equation}
	The $\max_{n_{m,q}, R_{m,q}}$ operator can be moved inside the summations and products because the objective is monotonically non-decreasing with respect to each $E_{m,q}^i \geq 0$ and the local feasible sets are independent. By optimizing each wireless link reliability, the problem reduces to (\ref{eq33}), confirming the optimality of the decomposition.
\end{proof}

\begin{remark}
	\label{remark_efficiency}
	We adopt two practical approximations for real-time execution. First, the two dimensional search is decoupled into two sequential one-dimensional subproblems. A logarithmic sampling strategy for $\beta$ is used to prune the search space while capturing diminishing marginal gains of transmission efficiency. Second, a resilience-aware beam search with an extendibility criterion (\ref{eq35}) is employed to avoid exponential complexity in path selection. This strategy prunes suboptimal branches early to achieve near-optimal reliability with linear complexity. Numerical results demonstrate that the cumulative optimality gap is negligible.
\end{remark}

\section{Simulation Results and Analysis}
\label{Section6}
Our objective is to validate the capability of FedRW to mitigate data heterogeneity, as well as the superiority of the proposed joint RW path selection and transmission parameter optimization algorithm. Specifically, we consider a simulation environment consisting of 100 clients and a central server. The server is positioned at the center of a circular area with a radius of 500 m, and clients are uniformly distributed within this area. We construct a directed Bernoulli random graph by connecting each pair of nodes independently with a probability of 0.5. The simulation results are the average of multiple experiments. To demonstrate the robustness, the figures include shaded regions representing the standard deviation.

We evaluate FedRW on four benchmark datasets. For both MNIST and Fashion-MNIST, we use a multilayer perceptron (MLP) (784$\times$100$\times$10) with ReLU activation and Softmax output, trained with cross-entropy loss. For CIFAR-10 and CIFAR-100, we adopt the VGG13 \cite{47Simonyan} and ResNet-18 \cite{52He} respectively, both with Softmax and cross-entropy loss. Simulation parameters are summarized in Table \ref{table2}, which serves as the default unless otherwise stated.

\begin{table}[!t]
	\centering
	\caption{Simulation Parameters}
	\label{table2}
	\begin{tabular}{|c|c|c|c|}
		\hline
		\multicolumn{2}{|c|}{\textbf{Wireless Parameters}} & \multicolumn{2}{|c|}{\textbf{Algorithmic Parameters}} \\
		\hline
		\textbf{Parameter} & \textbf{Value} & \textbf{Parameter} & \textbf{Value} \\
		\hline
		$B$ & 10 MHz & $U$ & 100 \\
		\hline
		$P$ & 23 dBm & $M$ & 5 \\
		\hline
		$N_0$ & $-$174 $\mathrm{dBm/Hz}$ & $Q$ & 3 \\
		\hline
		$\epsilon$ & 1e-5 & Batch size & 64 \\
		\hline
		$\gamma_{\tau}$ & 20 ms & $\lambda$ & 0.01 \\
		\hline
		$p_\varrho$ & 0.1 & $\phi$ & 100 \\
		\hline
		$\gamma_{R}$ & 2 & $\rho$ & 4 \\
		\hline
		$e$ & 128 bits &  & \\
		\hline
	\end{tabular}
\end{table}

We consider two data heterogeneity settings. 1) In the mixed Non-IID setting, each client has a $\zeta_n$ fraction of IID data and a $(1-\zeta_n)$ fraction from a label‑sharded Non‑IID pool (one class per shard) \cite{45McMahan}, where smaller $\zeta_n$ increases heterogeneity. 2) In the Dirichlet Non-IID setting, samples of each class $y$ are allocated to clients by sampling a probability vector $\mathbf{p}_y$ from a Dirichlet distribution with concentration parameter $\zeta_d$, i.e., $\mathbf{p}_y \sim \mathrm{Dir}(\zeta_d)$ \cite{5Wang}. A smaller $\zeta_d$ yields more skewed class distributions across clients.

Experiments are conducted on a computing node equipped with an Intel Xeon Platinum 8558 CPU and an NVIDIA RTX 4090 D GPU. The implementation uses Python 3.8 and PyTorch 1.13.0 with CUDA 11.8 for GPU acceleration.

\subsection{Performance Comparisons of FedRW Learning Framework}
We first evaluate FedRW in heterogeneous networks to demonstrate its effectiveness in handling non-IID data. As shown in Fig. \ref{fig2}, the X-axis indicates the cumulative communication overhead across all nodes over training. Under highly heterogeneous data distributions on MNIST, FedRW achieves higher accuracy and faster convergence than FedAvg under the same communication overhead, indicating that FedRW adheres to the core principle of FL, trading computation for communication efficiency. In particular, FedRW improves the final accuracy by 8.72\% and 2.26\%, respectively. Moreover, because FedRW traverses a more diverse set of data samples, it exhibits more stable convergence behavior. 

\begin{figure}[!t]
	\centering	
		\subfloat[MNIST (mixed Non-IID, $\zeta_n$=0)]{
			\includegraphics[width=0.48\linewidth]{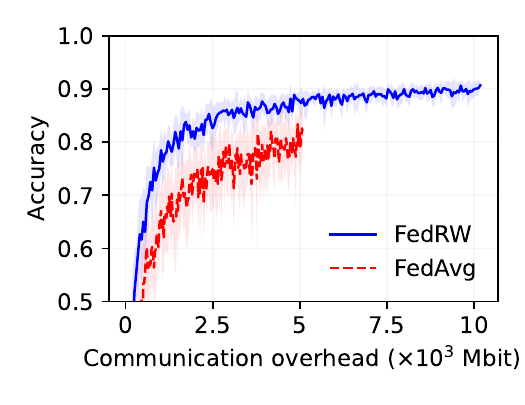}
		}
		\subfloat[MNIST (Dirichlet Non-IID, $\zeta_d$=0.1)]{
			\includegraphics[width=0.48\linewidth]{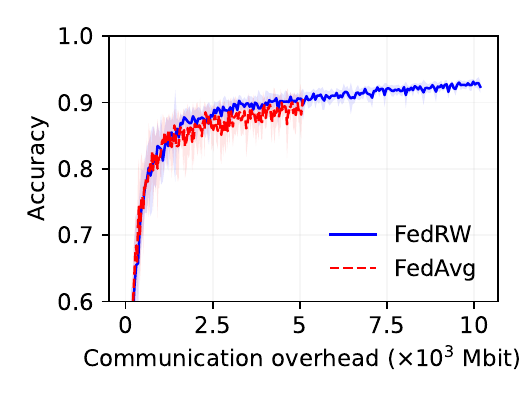}
		}
		\caption{MNIST classification accuracy vs. total communication overhead under Non-IID data distribution (200 rounds).}
		\label{fig2}
\end{figure}

\begin{figure}[!t]
	\centering	
		\subfloat[Fashion-MNIST (mixed Non-IID, $\zeta_n$=0)]{
			\includegraphics[width=0.48\linewidth]{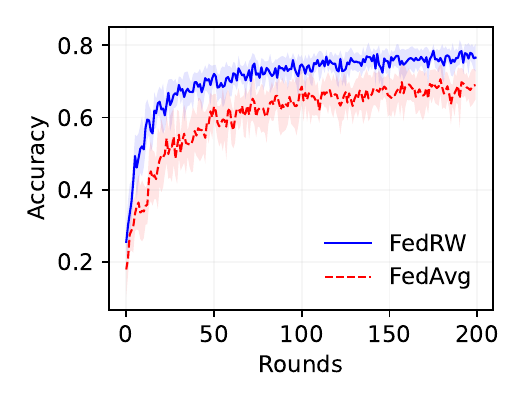}
		}
		\subfloat[Fashion-MNIST (mixed Non-IID, $\zeta_n$=0)]{
			\includegraphics[width=0.48\linewidth]{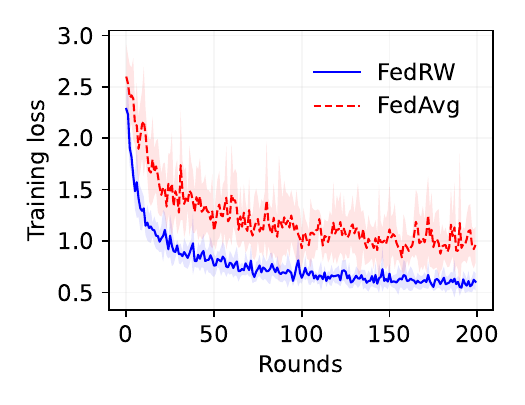}
	}
	
	\vspace{-0.4cm} 
	
		\subfloat[CIFAR-10 (mixed Non-IID, $\zeta_n$=0.2)]{
			\includegraphics[width=0.48\linewidth]{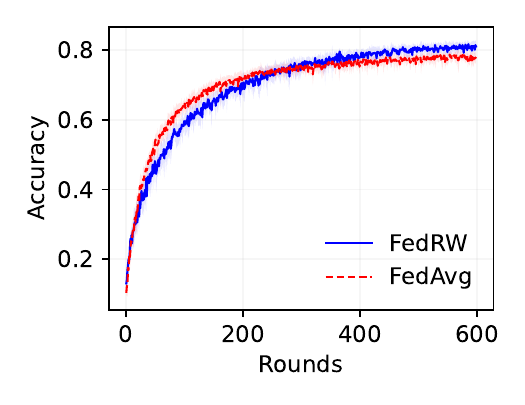}
		}
		\subfloat[CIFAR-10 (mixed Non-IID, $\zeta_n$=0.2)]{
			\includegraphics[width=0.48\linewidth]{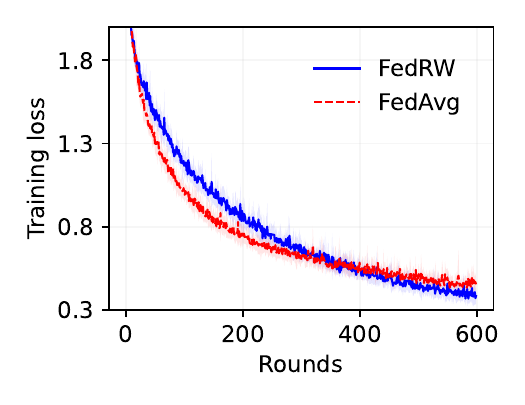}
	}
	
	\vspace{-0.4cm} 
	
		\subfloat[CIFAR-100 (mixed Non-IID, $\zeta_n$=0.4)]{
			\includegraphics[width=0.48\linewidth]{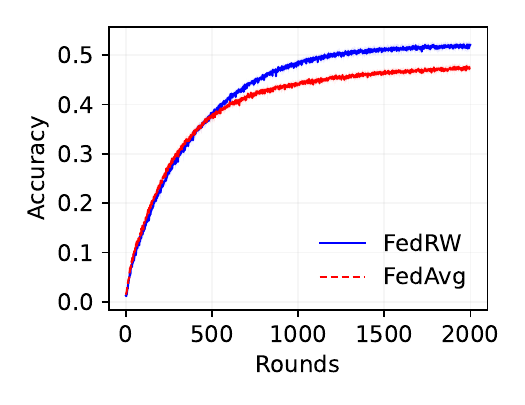}
		}
		\subfloat[CIFAR-100 (mixed Non-IID, $\zeta_n$=0.4)]{
			\includegraphics[width=0.48\linewidth]{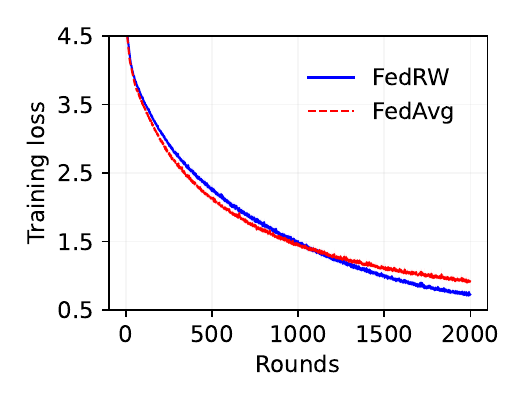}
		}
		\caption{Classification accuracy and training loss on Fashion-MNIST, CIFAR-10, and CIFAR-100 in mixed Non-IID settings.}
		\label{fig3}
\end{figure}

Fig. \ref{fig3} shows the accuracy and training loss of algorithms trained in strongly Non-IID settings on additional datasets, with $\gamma_\tau=$ 3 s on CIFAR-10 and CIFAR-100. For Fashion-MNIST, the results are similar to previous findings, with FedRW achieving 9\% higher accuracy and faster reduction in training loss than FedAvg. However, for CIFAR-10 and CIFAR-100, FedRW explores the parameter space with higher variance updates in the early stages, leading to slower convergence initially. In contrast, FedAvg tends to converge quickly to a suboptimal solution dominated by local data distributions, which appears more favorable during the early rounds. As training progresses, FedRW achieves lower training loss, eventually improving accuracy by 2.49\% and 4.97\% respectively, as it better captures the global data distribution.

\begin{figure}[!t]
	\centering
	\includegraphics[width=0.48\linewidth]{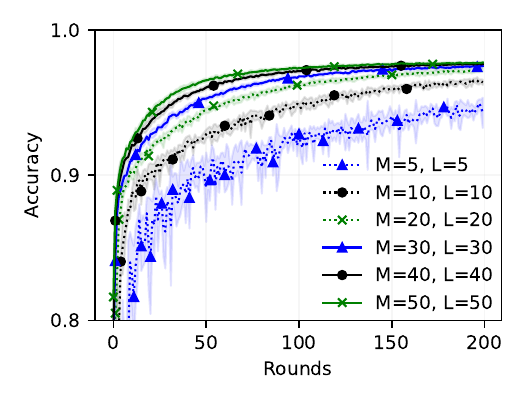}
	\caption{MNIST classification accuracy of FedRW in Dirichlet Non-IID ($\zeta_d$=0.1) setting with varying random walk chain counts and lengths.}
	\label{fig4}
\end{figure}

Fig. \ref{fig4} presents the classification accuracy of FedRW under different combinations of the number of RW chains $M$ and chain length $Q$. As $M$ and $Q$ increase, the model achieves higher accuracy, faster convergence, and improved stability. However, the performance gain tends to saturates when $M$, $Q \geq 20$, as the number of data samples visited per round becomes sufficient. When $M = Q =50$, the accuracy plateaus after 120 rounds, indicating that FedRW has converged.

Fig. \ref{fig_rw_avg_prox_nova} compares FedRW with FedAvg, FedProx \cite{11Li}, and FedNova \cite{Tackling2020Wang} in Dirichlet Non-IID settings. With proximal term coefficient tuned from $\{0.005, \dots, 0.1\}$, FedProx performs comparably to FedAvg, as its proximal term overly restricts local adaptation to extreme heterogeneity. FedNova slightly improves upon FedAvg but exhibits higher oscillations. This instability stems from its aggressive gradient scaling, which amplifies noise from clients with limited data in highly unbalanced data settings. In contrast, FedRW achieves the highest accuracy of 92.68\% and 61.75\% with superior stability. Its sequential traversal of local distributions acts as an implicit smoothing mechanism, effectively integrating diverse knowledge while avoiding the convergence volatility inherent in centralized aggregation.

\begin{figure}[!t]
	\centering	
		\subfloat[MNIST (Dirichlet Non-IID, $\zeta_d$=0.1)]{
			\includegraphics[width=0.48\linewidth]{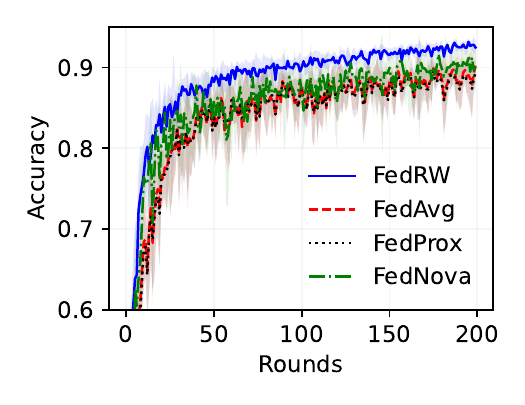}
		}
		\subfloat[CIFAR-100 (Dirichlet Non-IID, $\zeta_d$=0.2)]{
			\includegraphics[width=0.48\linewidth]{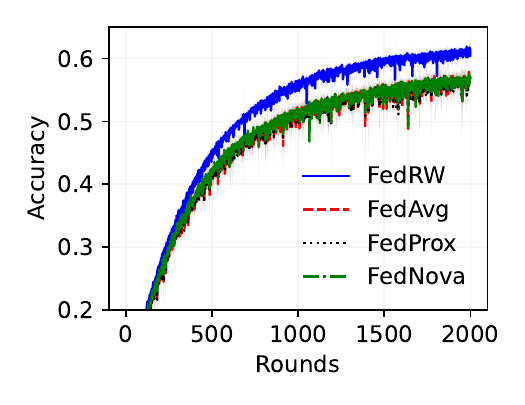}
		}
		\caption{Performance comparison between FedRW and state-of-the-art baselines in Dirichlet Non-IID setting.}
		\label{fig_rw_avg_prox_nova}
\end{figure}

\begin{figure}[!t]
	\centering	
		\subfloat[MNIST (mixed Non-IID, $\zeta_n$=0) \label{fig_mnist_gamma}]{
			\includegraphics[width=0.48\linewidth]{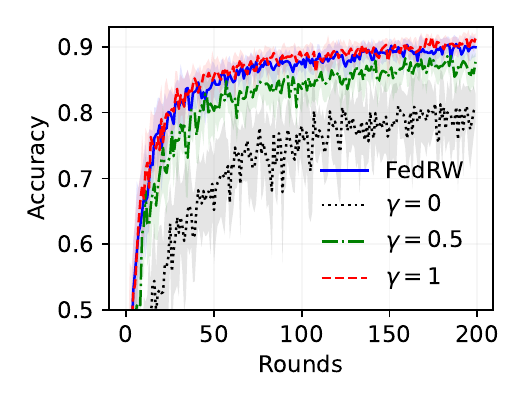}
		}
		\subfloat[CIFAR-100 (mixed Non-IID, $\zeta_n$=0.4) \label{fig_cifar100_gamma}]{
			\includegraphics[width=0.48\linewidth]{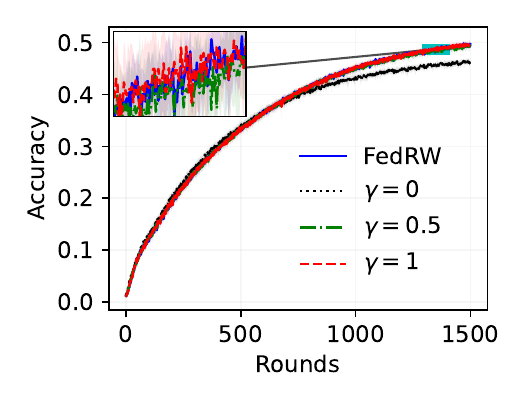}
		}
		\caption{Performance comparison of FedRW under varying degrees of intra-chain data diversity.}
		\label{fig_gamma}
\end{figure}

We introduce a parameter $\gamma$ to investigate intra-chain heterogeneity by controlling the data diversity of the $(1-\zeta_n)$ fraction within each chain. As shown in Fig. \ref{fig_gamma}, the highest accuracy and fastest convergence are achieved with fully heterogeneous chains ($\gamma=1$), with the standard FedRW random strategy performing closely. Conversely, homogeneous chains ($\gamma = 0$) suffer from gradient bias accumulation and sample homogenization, leading to significant performance degradation. The random strategy effectively enhances learning efficiency through implicit data mixing and mutual gradient correction. While the performance gap narrows under milder heterogeneity ($\zeta_n = 0.4$), the necessity of intra-chain diversity remains evident. These findings suggest that for extreme Non-IID scenarios, clustering clients by data similarity to systematically construct heterogeneous chains across clusters could further enhance learning efficiency, offering a promising direction for future optimization.

To validate the unweighted aggregation in (\ref{eq11}), we compared it against three alternative weighting schemes based on RW chain length (length randomly distributed in [3,10]), the number of unique clients visited per chain ($U=20$, $Q=10$), and the total data samples processed along the chain ($\zeta_d=0.1$) under scenarios favoring weighting. As shown in Fig. \ref{fig4_5}, unweighted aggregation achieves comparable testing loss while demonstrating better stability. Specifically, the oscillations observed in Fig. \ref{fig4_5} (c) for weighting by data volume stem from the fact that in the sequential structure of FedRW, a chain's final model already represents an iterative refinement of all prior participants. Assigning a large weight based on total data volume within a chain essentially treats this integrated model as a single monolithic node, which disproportionately amplifies the local biases of the final nodes in that chain. In contrast, unweighted aggregation treats each RW as an equal contribution, effectively smoothing out individual biases through path diversity and preserving a more balanced global representation. Furthermore, unweighted aggregation avoids the additional communication and storage overhead of tracking and transmitting chain metadata.

\begin{figure}[!t]
	\centering	
		\subfloat[MNIST ($\zeta_n$=0)]{
			\includegraphics[width=0.333\linewidth]{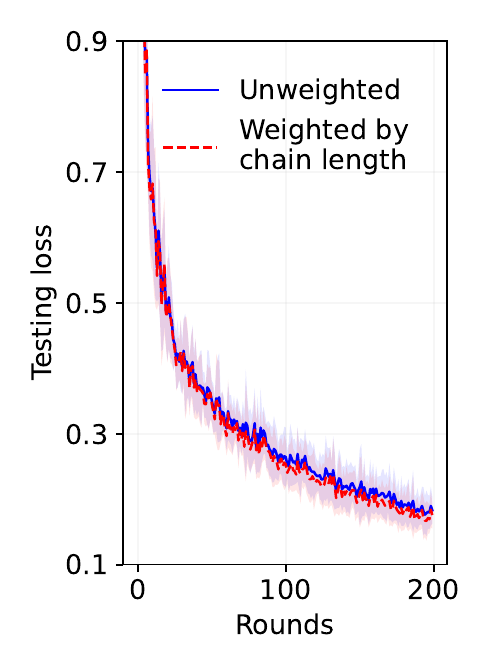}
		}
		\hspace{-0.54cm}  
		\subfloat[MNIST ($\zeta_n$=0)]{
			\includegraphics[width=0.333\linewidth]{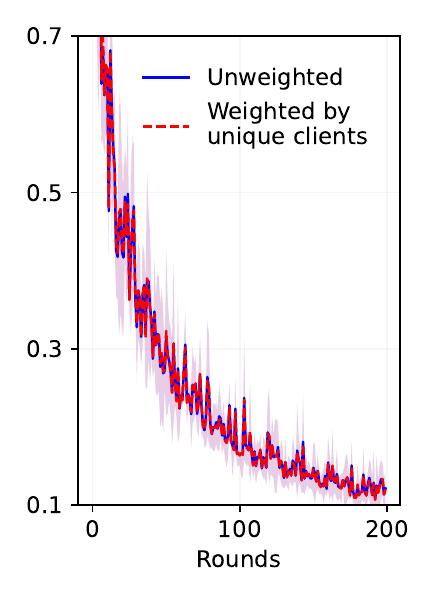}
		}
		\hspace{-0.54cm}  
		\subfloat[MNIST ($\zeta_d$=0.1)]{
			\includegraphics[width=0.333\linewidth]{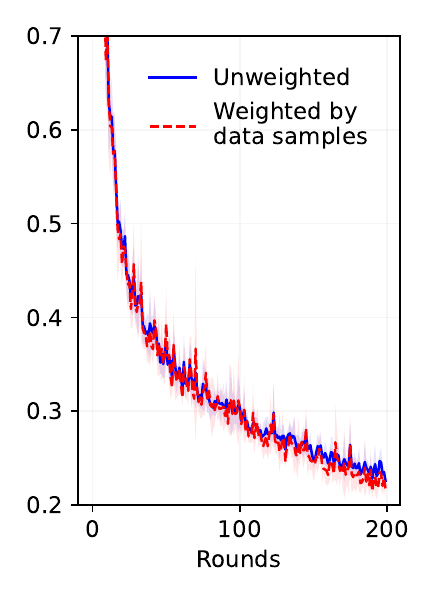}
		}
		\caption{Comparison between unweighted and weighted aggregation schemes.}
		\label{fig4_5}
\end{figure}

\subsection{Performance Comparisons of Joint RW Path Selection and Transmission Parameter Optimization Algorithm}
To the best of our knowledge, there is extremely limited prior work similar to FedRW, and none that addresses client selection or transmission parameter optimization under unreliable communication. To validate the proposed joint RW path selection and transmission parameter optimization algorithm, we compare it with three baselines: 1) Ideal FedRW: assumes perfect channels without transmission errors and employs a uniform random walk strategy; 2) RandParam FedRW: randomly selects transmission parameters but optimizes RW path selection based on the resulting link reliability. 3) Greedy-RW FedRW: optimizes transmission parameters while selecting RW paths greedily, where the next-hop node is selected as
\begin{equation}
	\label{eq38}
	i_{m,q+1}=\underset{j \in \mathcal{Z}_{m, q+1}}{\arg \max } E_{m, q}^j+\mathbb{I}_{q=Q} \cdot E_{m, Q}^j.
\end{equation}
where the indicator function $\mathbb{I}_{q = Q}$ activates only when $q = Q$, in which case the wireless link from the final client to the server is included. Note that as FedRW already outperforms all baselines in Section \ref{Section6}-A, we use only Ideal FedRW here. Our joint optimization nearly reaches this upper bound, other suboptimal curves are omitted for clarity.

\begin{figure}[!t]
	\centering	
		\subfloat[MNIST (mixed Non-IID, $\zeta_n$=0)]{
			\includegraphics[width=0.48\linewidth]{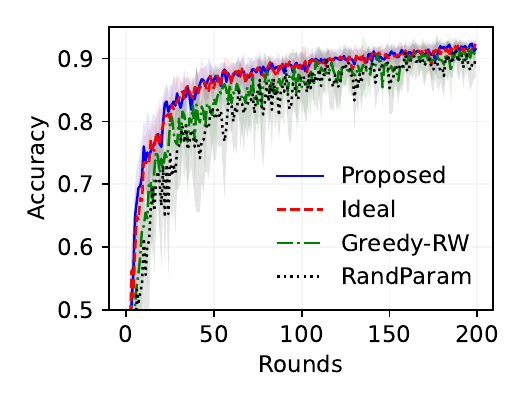}
		}
		\subfloat[MNIST (mixed Non-IID, $\zeta_n$=0)]{
			\includegraphics[width=0.48\linewidth]{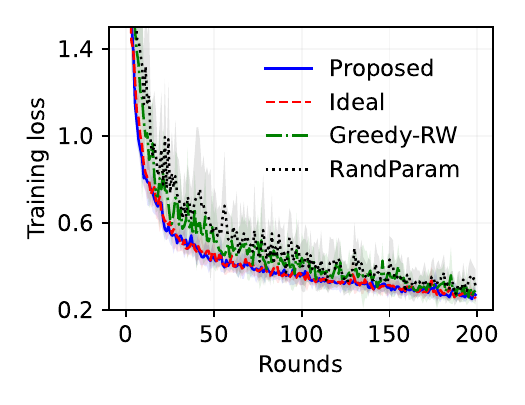}
	}
	
	\vspace{-0.4cm} 
	
		\subfloat[Fashion-MNIST (mixed Non-IID, $\zeta_n$=0)]{
			\includegraphics[width=0.48\linewidth]{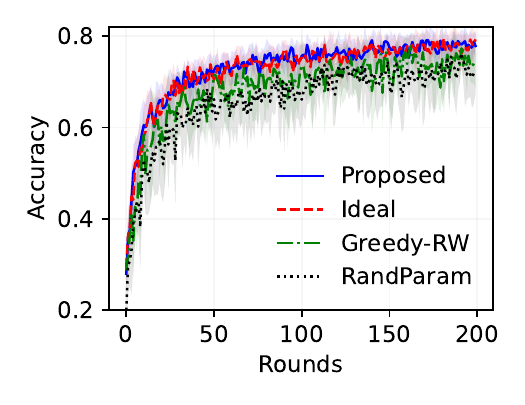}
		}
		\subfloat[Fashion-MNIST (mixed Non-IID, $\zeta_n$=0)]{
			\includegraphics[width=0.48\linewidth]{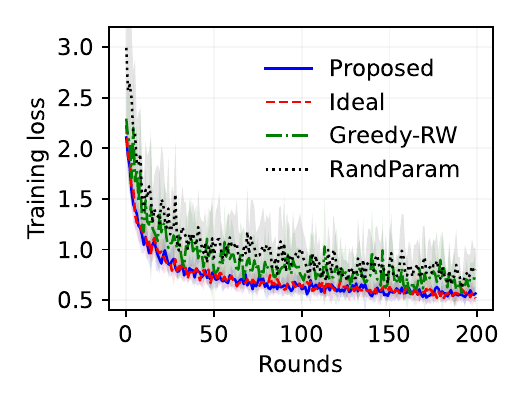}
	}
	
	\vspace{-0.4cm} 
	
		\subfloat[CIFAR-10 (mixed Non-IID, $\zeta_n$=0.2)]{
			\includegraphics[width=0.48\linewidth]{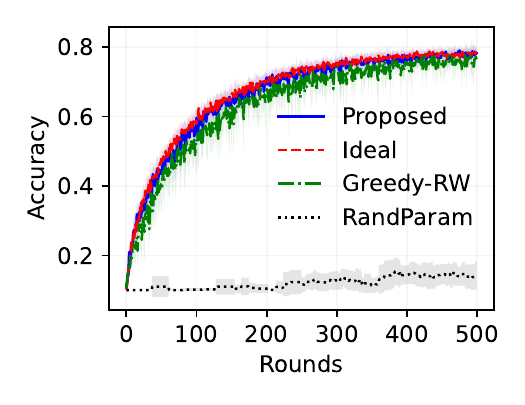}
		}
		\subfloat[CIFAR-10 (mixed Non-IID, $\zeta_n$=0.2)]{
			\includegraphics[width=0.48\linewidth]{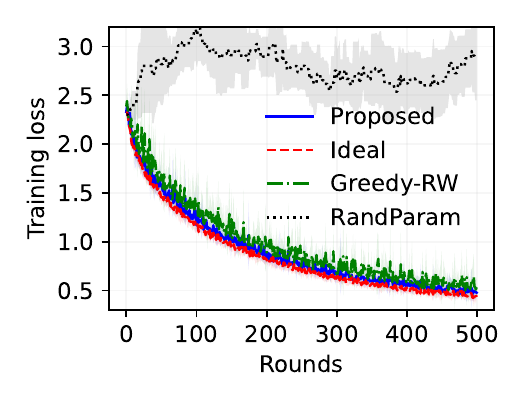}
	}

    \vspace{-0.4cm} 
	
		\subfloat[CIFAR-100 (mixed Non-IID, $\zeta_n$=0.4)]{
			\includegraphics[width=0.48\linewidth]{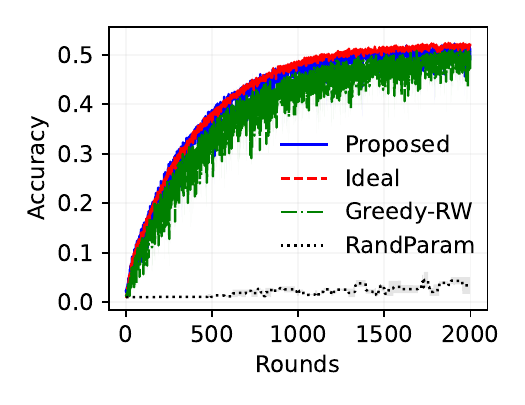}
		}
		\subfloat[CIFAR-100 (mixed Non-IID, $\zeta_n$=0.4)]{
			\includegraphics[width=0.48\linewidth]{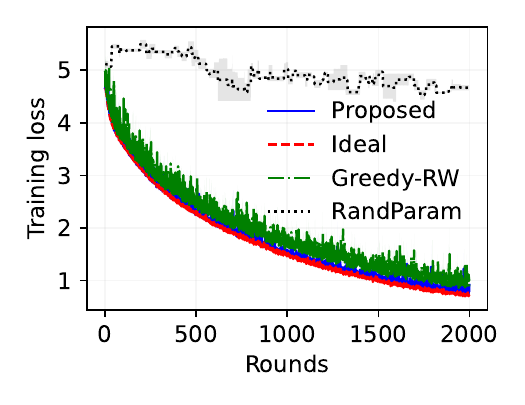}
		}
		\caption{Comparative of classification accuracy and loss between the proposed joint RW path selection and transmission parameter optimization algorithm and baselines on MNIST, Fashion-MNIST, CIFAR-10, and CIFAR-100.}
		\label{fig5}
\end{figure}

Fig. \ref{fig5} (a) and (b) show the accuracy and training loss of proposed algorithm and baselines over training rounds on the MNIST classification. Compared to Greedy-RW and RandParam algorithms, the proposed algorithm achieves an average accuracy improvement of 2.78\% and 4.67\% and has a faster convergence rate. The 2.78\% performance gain is attributed to the foresight of the dynamic pruning-based beam search algorithm, which avoids the short-sighted decisions typical of greedy algorithms. This enables the proposed algorithm to match the accuracy and convergence rate of the Ideal FedRW. The 4.67\% improvement comes from the optimization of transmission parameters, which allows access to more and higher quality neighbors during RW path selection. In contrast, Greedy-RW and RandParam algorithms experience larger oscillations due to discarding models with transmission errors. Additionally, the training loss suggests that the empirical results are consistent with the expected convergence behavior outlined in the theoretical analysis. Fig. \ref{fig5} (c) $-$ (h) present the performance on Fashion-MNIST, CIFAR-10, and CIFAR-100. Consistent with the findings on MNIST, the proposed algorithm achieves the highest accuracy with minimal convergence oscillation across all three datasets. For CIFAR-10 and CIFAR-100 in particular, RandParam suffers from ineffective training, as its random transmission parameters often cause most wireless links to violate the relatively strict delay constraints. These results underscore the necessity of jointly optimizing RW path selection and transmission parameters to reduce training loss over unreliable wireless networks.

From Fig. \ref{fig5}, we observe that our proposed algorithm achieves stable convergence and improved performance on non-convex neural networks. This empirical evidence suggests that, although these models are non-convex, their loss landscapes during training may possess properties akin to the P{\L} condition assumed in our theoretical analysis, making our convergence guarantees practically relevant.

\begin{figure}[!t]
	\centering	
		\subfloat[Fashion-MNIST (mixed Non-IID, $\zeta_n$=0)]{
			\includegraphics[width=0.48\linewidth]{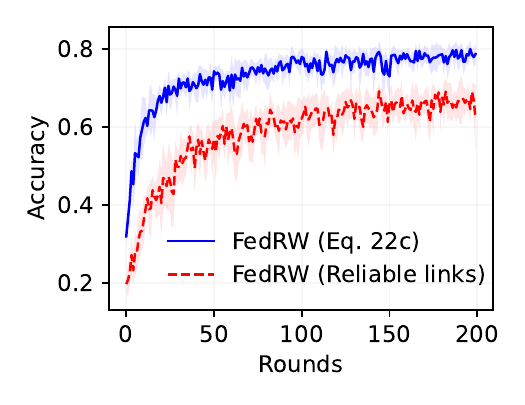}
		}
		\subfloat[Fashion-MNIST (Dirichlet Non-IID, $\zeta_d$=0.1)]{
			\includegraphics[width=0.48\linewidth]{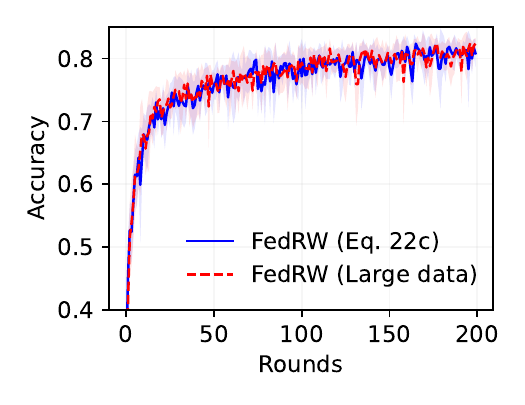}
		}
		\caption{Comparison of classification accuracy between the proposed algorithm and its variant with constraint (\ref{eq27c}) relaxed.}
		\label{fig_relax}
\end{figure}

We analyze the impact of relaxing constraint (\ref{eq27c}) on convergence to verify that this restriction is essential for heterogeneity mitigation. As shown in Fig. \ref{fig_relax} (a), prioritizing reliable links by allowing clients with high wireless reliability to participate in multiple chains leads to a final accuracy drop of 11.3\%. This occurs because the optimization objective in (\ref{eq27}) favors high reliability, causing the algorithm to persistently select the same few optimal clients and creating severe data sampling bias. As shown in Fig. \ref{fig_relax} (b), allowing the top 10\% of clients with the largest data volume to participate in multiple chains yields no advantage over standard FedRW. The inherent bias of beam search toward reliable links dominates data size considerations, while repeated selection of the same Non-IID clients introduces redundancy without enhancing data diversity. Therefore, constraint (\ref{eq27c}) ensures that FedRW captures a broader range of data distributions across the network, effectively preventing the global model from being dominated by a small subset of biased clients.

\begin{figure*}[!t]
	\centering	
	\subfloat[$\gamma_R$ vs. Number of aggregated RW chains]{
		\includegraphics[width=0.27\linewidth]{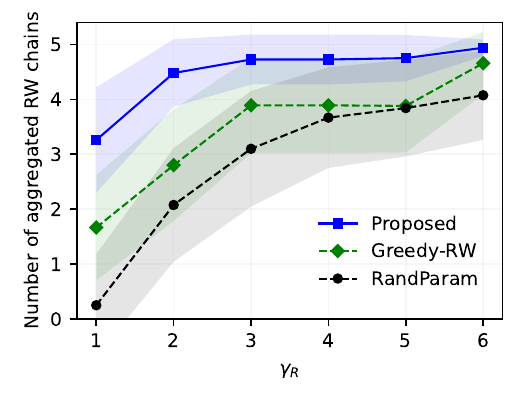}
		}
	\subfloat[$\gamma_\tau$ vs. Number of aggregated RW chains]{
		\includegraphics[width=0.27\linewidth]{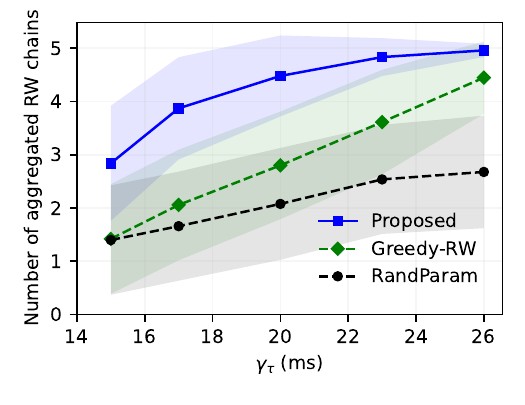}
	}
	\subfloat[BER vs. Number of aggregated RW chains]{
		\includegraphics[width=0.27\linewidth]{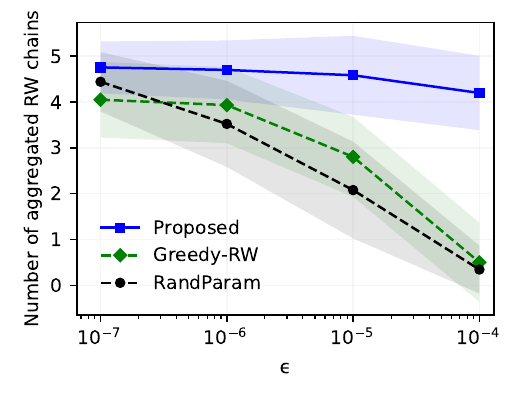}
	}
	\caption{Comparison of number of aggregated RW chains under different network conditions on MNIST in mixed Non-IID ($\zeta_n$=0) setting.}
	\label{fig7}
\end{figure*}

We further evaluate the proposed algorithm's advantages over baselines under varying network conditions. Fig. \ref{fig7} (a) shows how the number of RW chains participating in aggregation changes with $\gamma_R$. The proposed algorithm consistently achieves the best performance across all $\gamma_R$. As $\gamma_R$ increases, the success probability of model parameter transmission improves exponentially, narrowing the performance gap among algorithms. When $\gamma_R = 6$, nearly all wireless links with optimized transmission parameters become reliable, allowing Greedy-RW to perform comparably to the proposed algorithm.

Fig. \ref{fig7} (b) presents the number of RW chains aggregated under varying packet delay constraints $\gamma_\tau$. As the delay constraint is relaxed, smaller packet sizes or more retransmissions are allowed, which improves transmission success rates under unreliable channels. The proposed algorithm consistently outperforms all baselines across all levels of delay constraints. Greedy-RW grows faster than RandParam because optimized transmission parameters allow it to better leverage the available delay budget to improve transmission reliability.

Fig. \ref{fig7} (c) shows the number of aggregated RW chains under varying BER levels. As BER increases, all methods degrade due to reduced transmission reliability, but the performance gap widens as baselines deteriorate more rapidly. RandParam fails to adjust transmission parameters, making it hard to find usable next-hop links. Greedy-RW can find reliable links locally, but often cannot extend the path due to surrounding unreliable neighbors. Compared to the baselines, the proposed algorithm effectively addresses both limitations by jointly optimizing transmission parameters and path selection, resulting in robust aggregation even under high BER.

\section{Conclusion}
\label{Section7}
In this paper, we address two key challenges in wireless federated learning: data heterogeneity and unreliable transmission. We first propose a new FedRW framework, where multiple RW chains perform local updates in parallel, and their updates are aggregated by a server to mitigate data heterogeneity. Building on this, we formulate a joint RW path selection and transmission parameter optimization problem, aiming to minimize the training loss of FedRW. We simplify the problem using an upper bound of the expected convergence over unreliable wireless networks and develop a distributed solution. Each server or client only needs to determine the optimal packet size, maximum number of retransmissions, and next-hop based on the wireless link reliability and latency to its neighbors. Simulation results demonstrate the effectiveness and superiority of the proposed wireless FedRW in handling data heterogeneity and reducing training loss.

While our analysis provides convergence upper bounds, two theoretical questions remain open. First, matching lower bounds could be established by applying rate distortion theory to model gradient transmission over unreliable links, or through adversarial constructions with orthogonal client data. Second, extending the analysis beyond P{\L} conditions requires new techniques, such as constructing Lyapunov functions that incorporate wireless conditions, data heterogeneity, and random walk dynamics.

\begin{appendices}
\section{} \label{appendixA}
According to (\ref{eq19}), the global model update at round $t$ is given by
\begin{equation}
	\label{eq39}
	\mathbf{w}^{t+1}=\mathbf{w}^t-\lambda\left(\nabla F\left(\mathbf{w}^t\right)-\Theta\right),
\end{equation}
where \\
$\Theta=\nabla F\left(\mathbf{w}^t\right)-\frac{\sum_{m=1}^M \prod_{q=0}^Q \sum_{i \in \mathcal{N}_{m, q}} a_{m, q}^i C_{m, q}^i \nabla f_{i_{m,Q}}\left(\mathbf{w}_{m,Q-1}^t\right)}{\sum_{m=1}^M \prod_{q=0}^Q \sum_{i \in \mathcal{N}_{m, q}} a_{m, q}^i C_{m, q}^i}$.

To prove Theorem \ref{theorem1}, we leverage the quadratic upper bound provided by $L$-smoothness.
\begin{equation}
	\label{eq40}
	\begin{aligned}
		& F\left(\mathbf{w}^{t+1}\right)\leq F\left(\mathbf{w}^t\right)+\left(\mathbf{w}^{t+1}-\mathbf{w}^t\right)^\top \nabla F\left(\mathbf{w}^t\right)\\
		& \qquad\quad\qquad+\frac{L}{2}\left\|\mathbf{w}^{t+1}-\mathbf{w}^t\right\|^2.
	\end{aligned}
\end{equation}

Given the learning rate $\lambda = \frac{1}{L}$, we can express the expected loss as
\begin{equation}
	\label{eq41}
    \begin{aligned}
	    \mathbb{E}\left[F\left(\mathbf{w}^{t+1}\right)\right]& \leq \mathbb{E}\bigg[F\left(\mathbf{w}^t\right)-\lambda\left(\nabla F\left(\mathbf{w}^t\right)-\Theta\right)^\top \nabla F\left(\mathbf{w}^t\right) \\
	    & \quad +\frac{\left\|\nabla F\left(\mathbf{w}^t\right)\right\|^2}{2 L}-\frac{\Theta^\top \nabla F\left(\mathbf{w}^t\right)}{L}+\frac{\|\Theta\|^2}{2 L}\bigg] \\
	    & = \mathbb{E}\left[F\left(\mathbf{w}^t\right)\right]-\frac{\left\|\nabla F\left(\mathbf{w}^t\right)\right\|^2}{2 L}+\frac{\mathbb{E}\left[\|\Theta\|^2\right]}{2 L}.
    \end{aligned}
\end{equation}

Then, we derive the upper bound of $\mathbb{E}[\|\Theta\|^2]$. Let $\nabla \mathbf{g}_{i_{m,Q}}^t = \nabla f_{i_{m,Q}}(\mathbf{w}_{m,Q-1}^t)$, we first decompose (\ref{eq18}) as $\nabla F(\mathbf{w}^t)=\frac{1}{\psi}(\sum_{m \in \Psi_{s}} \nabla \mathbf{g}_{i_{m,Q}}^t+\sum_{m \in \Psi_f} \nabla \mathbf{g}_{i_{m,Q}}^t)$, where $\Psi_s$ denotes the set of RW chains that are selected and successfully transmitted in round $t$, and $\Psi_f$ denotes those that are either not selected or failed in transmission. Since $\sum_{m=1}^M \prod_{q=0}^Q \sum_{i \in \mathcal{N}_{m, q}} a_{m, q}^i C_{m, q}^i \nabla \mathbf{g}_{i_{m,Q}}^t=\sum_{m \in \Psi_s} \nabla \mathbf{g}_{i_{m,Q}}^t$. Therefore, we have
\begin{equation}
	\label{eq42}
	\begin{aligned}
		\mathbb{E}\left[\|\Theta\|^2\right]&=\mathbb{E}\Bigg[\bigg\|\frac{1}{\psi}\bigg[\sum_{m \in \Psi_s} \nabla \mathbf{g}_{i_{m,Q}}^t+\sum_{m \in \Psi_f} \nabla \mathbf{g}_{i_{m,Q}}^t\bigg] \\
		&\quad -\frac{\sum_{m \in \Psi_s} \nabla \mathbf{g}_{i_{m,Q}}^t}{\sum_{m=1}^M \prod_{q=0}^Q \sum_{i \in \mathcal{N}_{m, q}} a_{m, q}^i C_{m, q}^i}\bigg\|^2\Bigg] \\
		& =\mathbb{E}\Bigg[\bigg\|-\frac{\psi-\sum_{m=1}^M \prod_{q=0}^Q \sum_{i \in \mathcal{N}_{m, q}} a_{m, q}^i C_{m, q}^i}{\psi \sum_{m=1}^M \prod_{q=0}^Q \sum_{i \in \mathcal{N}_{m, q}} a_{m, q}^i C_{m, q}^i}  \\
		& \quad \times \sum_{m \in \Psi_s} \nabla \mathbf{g}_{i_{m,Q}}^t +\frac{1}{\psi} \sum_{m \in \Psi_f} \nabla \mathbf{g}_{i_{m,Q}}^t \bigg\|^2\Bigg] .
	\end{aligned}
\end{equation}

By applying the triangle inequality, we have
\begin{equation}
	\label{eq43}
	\begin{aligned}
		&\mathbb{E}\left[\|\Theta\|^2\right] \leq  \mathbb{E}\Bigg[\frac{\psi-\sum_{m=1}^M \prod_{q=0}^Q \sum_{i \in \mathcal{N}_{m, q}} a_{m, q}^i C_{m, q}^i}{\psi\sum_{m=1}^M \prod_{q=0}^Q \sum_{i \in \mathcal{N}_{m, q}} a_{m, q}^i C_{m, q}^i}  \\
		& \qquad \quad \times \sum_{m \in \Psi_s}\left\|\nabla \mathbf{g}_{i_{m,Q}}^t\right\| +\frac{1}{\psi} \sum_{m \in \Psi_f}\left\|\nabla \mathbf{g}_{i_{m,Q}}^t\right\|\Bigg]^2 .
	\end{aligned}
\end{equation}

Using $\|\nabla \mathbf{g}_{i_{m,Q}}^t\| \leq \sqrt{\alpha^2+\sigma^2\|\nabla F(\mathbf{w}^t)\|^2}$ from Definition \ref{definition1}, we have
\begin{equation}
	\label{eq44}
	\begin{aligned}
	    & \sum_{m \in \Psi_s}\left\|\nabla \mathbf{g}_{i_{m,Q}}^t\right\|\\
	    & \quad \leq \sqrt{\alpha^2+\sigma^2\left\|\nabla F\left(\mathbf{w}^t\right)\right\|^2} \sum_{m=1}^M \prod_{q=0}^Q \sum_{i \in \mathcal{N}_{m, q}} a_{m, q}^i C_{m, q}^i.
    \end{aligned}
\end{equation}
Noting that $|\Psi_s| + |\Psi_f| = \Psi$, we also have $\sum_{m \in \Psi_f}\|\nabla \mathbf{g}_{i_{m,Q}}^t\| \leq \sqrt{\alpha^2+\sigma^2\|\nabla F(\mathbf{w}^t)\|^2}(\psi-\sum_{m=1}^M \prod_{q=0}^Q \sum_{i \in \mathcal{N}_{m, q}} a_{m, q}^i C_{m, q}^i)$. Therefore, (\ref{eq43}) can be expressed as
\begin{equation}
	\label{eq45}
	\begin{aligned}
		\mathbb{E}\left[\|\Theta\|^2\right] &\leq \mathbb{E}\Bigg[\frac{2}{\psi}\bigg(\psi-\sum_{m=1}^M \prod_{q=0}^Q \sum_{i \in \mathcal{N}_{m, q}} a_{m, q}^i C_{m, q}^i\bigg)\\
		& \quad \times \sqrt{\alpha^2+\sigma^2\left\|\nabla F\left(\mathbf{w}^t\right)\right\|^2}\Bigg]^2 \\
		& =\frac{4}{\psi^2} \mathbb{E}\Bigg[\psi-\sum_{m=1}^M \prod_{q=0}^Q \sum_{i \in \mathcal{N}_{m, q}} a_{m, q}^i C_{m, q}^i\Bigg]^2\\
		&\quad \times \left(\alpha^2+\sigma^2\left\|\nabla F\left(\mathbf{w}^t\right)\right\|^2\right).
	\end{aligned}
\end{equation}

Based on the expectation of the transmission success indicator $\mathbb{E}[C_{m, q}^i] = (p_{m, q}^i)^{\beta_{m, q}^i} = E_{m, q}^i$, and $\sum_{m \in \Psi_s} 1+\sum_{m \in \Psi_f} 1=\psi$, we have
\begin{equation}
	\label{eq46}
	\begin{aligned}
		\mathbb{E}\left[\|\Theta\|^2\right] \leq&  \frac{4}{\psi}\bigg(\psi-\sum_{m=1}^M \prod_{q=0}^Q \sum_{i \in \mathcal{N}_{m, q}} a_{m, q}^i E_{m, q}^i\bigg) \\
		& \times \left(\alpha^2+\sigma^2\left\|\nabla F\left(\mathbf{w}^t\right)\right\|^2\right),
	\end{aligned}
\end{equation}

Substituting (\ref{eq46}) into (\ref{eq41}) and subtracting $\mathbb{E}[F(\mathbf{w}^*)]$ in both sides, we obtain
\begin{equation}
	\label{eq47}
	\begin{aligned}
		& \mathbb{E}\left[F\left(\mathbf{w}^{t+1}\right)-F\left(\mathbf{w}^*\right)\right] \leq \mathbb{E}\left[F\left(\mathbf{w}^t\right)-F\left(\mathbf{w}^*\right)\right]   \\
		& \quad -\frac{\left\|\nabla F\left(\mathbf{w}^t\right)\right\|^2}{2 L}+\frac{2}{\psi L}\bigg(\psi-\sum_{m=1}^M \prod_{q=0}^Q \sum_{i \in \mathcal{N}_{m, q}} a_{m, q}^i E_{m, q}^i\bigg) \\
		& \quad \times \left(\alpha^2+\sigma^2\left\|\nabla F\left(\mathbf{w}^t\right)\right\|^2\right) \\
		& =\mathbb{E}\left[F\left(\mathbf{w}^t\right)-F\left(\mathbf{w}^*\right)\right] -\frac{\left\|\nabla F\left(\mathbf{w}^t\right)\right\|^2}{2 L}\\
		& \quad +\frac{2 \alpha^2}{\psi L}\bigg(\psi-\sum_{m=1}^M \prod_{q=0}^Q \sum_{i \in \mathcal{N}_{m, q}} a_{m, q}^i E_{m, q}^i \bigg) \\
		& \quad + \frac{2 \sigma^2\left\|\nabla F\left(\mathbf{w}^t\right)\right\|^2}{\psi L}\bigg(\psi-\sum_{m=1}^M \prod_{q=0}^Q \sum_{i \in \mathcal{N}_{m, q}} a_{m, q}^i E_{m, q}^i\bigg).
	\end{aligned}
\end{equation}

Since $F$ satisfies the $\mu$-P{\L} property, we have
\begin{equation}
	\label{eq48}
	\left\|\nabla F\left(\mathbf{w}^t\right)\right\|^2 \geq 2 \mu\left(F\left(\mathbf{w}^t\right)-F\left(\mathbf{w}^*\right)\right).
\end{equation}

Substituting (\ref{eq48}) into (\ref{eq47}), we have
\begin{equation}
	\label{eq49}
	\begin{aligned}
		& \mathbb{E}\left[F\left(\mathbf{w}^{t+1}\right)-F\left(\mathbf{w}^*\right)\right] \leq \mathcal{J} \mathbb{E}\left[F\left(\mathbf{w}^t\right)-F\left(\mathbf{w}^*\right)\right] \\
		& +\frac{2 \alpha^2}{\psi L}\bigg(\psi-\sum_{m=1}^M \prod_{q=0}^Q \sum_{i \in \mathcal{N}_{m, q}} a_{m, q}^i E_{m, q}^i\bigg),
	\end{aligned}
\end{equation}
where\\
$\mathcal{J}=1-\frac{\mu}{L}+\frac{4 \sigma^2 \mu}{\psi L} (\psi-\sum_{m=1}^M \prod_{q=0}^Q \sum_{i \in \mathcal{N}_{m, q}} a_{m, q}^i E_{m, q}^i)$.

By recursive (\ref{eq49}), we have
\begin{equation}
	\label{eq50}
	\begin{aligned}
		& \mathbb{E}\left[F\left(\mathbf{w}^{t+1}\right)-F\left(\mathbf{w}^*\right)\right] \leq \mathcal{J}^t \mathbb{E}\left[F\left(\mathbf{w}^0\right)-F\left(\mathbf{w}^*\right)\right] \\
		& +\frac{2 \alpha^2}{\psi L}\bigg(\psi-\sum_{m=1}^M \prod_{q=0}^Q \sum_{i \in \mathcal{N}_{m, q}} a_{m, q}^i E_{m, q}^i\bigg) \frac{1-\mathcal{J}^t}{1-\mathcal{J}}.
	\end{aligned}
\end{equation}
This completes the proof.
\end{appendices}

\bibliographystyle{IEEEtran}
\bibliography{IEEEabrv,ref}

\end{document}